\newtheorem{theorem}{Theorem}[section]
\newtheorem{lemma}{Lemma}[section]
\newtheorem{claim}{Claim}[section]
\newtheorem{cor}{Corollary}[theorem]
\newcounter{rem}
\newcommand{\ignore}[1]{ }
\def\mPr{{\bf P}}
\begin{document}

\title{Approximating Quadratic 0-1 Programming via SOCP}

\author{Sanjiv Kapoor\footnote{Department of Computer Science, Illinois Institute of Technology, Chicago, IL 60616. E-mail: {\tt kapoor@iit.edu}} and Hemanshu Kaul\footnote{Department of Applied Mathematics, Illinois Institute of Technology, Chicago, IL 60616. E-mail: {\tt kaul@math.iit.edu}}
}
% and Michael J. Pelsmajer\footnote{Department of Applied Mathematics, Illinois Institute of Technology, Chicago, IL 60616. E-mail: {\tt pelsmajer@iit.edu}.}

%\date{}

\maketitle

\begin{abstract}
We consider the problem of approximating
Quadratic O-1 Integer Programs with bounded number
of constraints and  non-negative
constraint matrix entries, which we term as PIQP.

We describe and analyze a randomized algorithm based on a
program with hyperbolic constraints (a Second-Order Cone Programming -SOCP- formulation)
that achieves an approximation ratio of
$O(a_{max} \frac{n}{\beta(n)})$, where $a_{max}$ is the maximum
size of an entry in the constraint matrix and  $\beta(n)  \leq \min_i{W_i} $, where
$W_i$ are the constant terms that define the constraint inequalities.
%A Positive Integer Quadratic Program with fixed number of linear constraints and $\min_i{W_i} \geq \beta(n)$ can be approximated to within a factor of
We note that by appropriately choosing $\beta(n)$
the  randomized algorithm, when combined with other algorithms that achieve good
approximations for smaller values of $ W_i$, allows better algorithms for  the complete
range of $W_i$.
This, together  with a greedy algorithm, provides a
$O^*(a_{max} n^{1/2} )$ factor approximation,
%where $a_{max}$ is the maximum size of the entries of the constraint matrix
where $O^*$ hides logarithmic terms.
%\rem{Timing needs check}
Our solution is achieved by a randomization of
the optimal solution to the relaxed version
of the hyperbolic program. We show that this solution provides the approximation
bounds using concentration bounds  provided by Chernoff-Hoeffding and Kim-Vu.
%To our knowledge, this is the first  non-trivial approximation algorithm for this problem
%and application of SOCP.

\end{abstract}

\section{Introduction}

In this paper, we study optimizing 0-1 integral quadratic programs.
We consider the class of
quadratic integer programs where $X \in \{ 0,1 \}, X= (x_1 , \ldots x_n)$:
\begin{eqnarray*}
\max  \  X^T B X &+& c^TX\\
{\rm subject \ to}\\
  a_i^TX &\leq& W_i, \ \ i=1, \ldots p \\
 X_i &\in& \{ 0, 1 \}
\end{eqnarray*}
for bounded number of constraints $p$. We assume that
the quadratic objective  function is defined by a symmetric matrix $B$ with
non-negative coefficients and  the linear term is defined by a positive vector
$c$. Furthermore, $a_{ij} \geq 0 , i=1 \ldots p, j= 1 \ldots n$.
We term the above as a {\em Positive 0-1 Quadratic Program} (PIQP).
The problem is NP-hard and our interest is in designing approximation schemes
for this problem.
%{\bf Does Q have to be symmetric?-yes but maybe that can be rid of}.
%Suppose  we have to select a subset of items by allowing for (pairwise) dependencies between items that could possibly modify their contribution to the benefit of the whole ensemble. In addition to the sum of the individual elements, the benefit function also includes terms involving the benefit contributions of various subsets (pairs) of items.

While Quadratic Integer programs have been considered in the operations  research community \cite{KimKojima2001},
approximation  algorithms for the general problem have not been actively considered.
A recent result for positive semi-definite matrices by Guruswami and Sinop\cite{GS2011},
provides an approximation scheme for quadratic integer  programs with PSD objective
for graph problems
with a bound of $\frac{(1+\epsilon)}{\min ( 1, \lambda_r)}$ and a running time of
$n^{O(r^*/\epsilon^2)}$, where $r^*$ is the number of eigenvalues of the graph
Laplacian  smaller than $1-\epsilon$.  Unfortunately we do not see any way to bound $r^*$ in general.

An important subclass of the PIQP is the generalized knapsack problem.
The  motivation is a resource allocation problem, where the items are proposed projects,
the weight of each item is the cost of implementation, and
the benefit measures the gains due to implementation of that project.
Typically, this budgeting problem is implemented either as a 0-1 Knapsack problem which assumes that the projects are pairwise independent so their benefits are additive, or all possible combinations of projects are considered which increases the size of the problem exponentially. However, in most applications the benefit of two projects can be less than or greater than the sum of the individual projects due to underlying dependencies among the projects. The current authors with collaborators have explored these ideas of interdependency between projects in the context of transportation resource allocation \cite{LiKK11}. This generalized knapsack model was applied in a computational study using real-life data on travel demand, roadway network designs, and traffic operations, as well as six major projects proposed for possible investments to the Chicago downtown area. The computational study revealed that the overall benefits in terms of travel-time savings after considering project network-wide impacts and their interdependency relationships are much lower (by 38-64 percent) as compared with those established without considering project interdependency relationships, indicating an inflated benefits when interdependencies are not considered.

To model this situation, we define a problem that will
be critical to developing our approximation algorithm.
The {\em Graph Knapsack Problem}, $GKP(G,b,w,W)$,
where $G = (V, E)$ is an undirected graph
with $n$ vertices,  $w: V \rightarrow \mathbb{Z}^+$ is a weight function,
$b: E \cup V \rightarrow \mathbb{Z}$ is a benefit function on vertices and edges, and  $W$ is a weight bound.
The vertices correspond to the items in the Knapsack problem. The benefit of a subgraph $H=(V_H, E_H)$ is $b(H)= \sum_{v \in V_H} b(v) + \sum_{e \in E_H} b_{e}$ while its weight is $ w(H)= \sum_{v \in V_H} w(v)$. Note that the benefits can be negative;
negative weight edges model the case where two projects' benefits are literally less than the sum of their parts.
Given a  subset of vertices $S$, we consider the subgraph induced by $S$, termed $G[S]$.
The graph knapsack problem requires the determination of a subset of vertices $S \subseteq V$ that maximizes the benefit of the induced subgraph, $b(G[S])$ with the restriction that its weight $w(G[S])$ is less than $W$. Note that this reduces to the classical  knapsack problem when there are no edges in the graph $G$.
Clearly GKP is  $NP$-Hard.

From a graph theoretic point of view, GKP is related to the {\em maximum clique problem}.
We can reduce the clique problem to the graph-knapsack problem.
Given a graph $G$, suppose we wish to determine if $G$ contains a
clique of size $t$. We define an instance of GKP on $G$
with $W=t$, $w_i =1$, $b_i = 0$, $b_e=1$ for $e\in E(G)$.
Graph $G$ has a $K_t$ iff  GKP has benefit at least ${t \choose 2}$.

We may note that, unless $P=NP$, achieving an approximation ratio
better than $n^{1-\epsilon}$ is
impossible for the clique problem \cite{Hastad99,Zuckerman05}. However, it is not known whether there exist approximation-preserving transformations between GKP and Max-Clique.

In fact, GKP also generalizes {\em the Dense $k$-Subgraph problem} ($k$-DSP problem)
(see \cite{FKP2001,SW1998}), which requires finding an $k$-vertex induced subgraph of an edge-weighted graph with maximum density.
This corresponds to GKP with edges of benefit 1 while vertices have zero benefit,
and the weight of each vertex is 1 with $W=k$. The dense $k$-subgraph problem is well-researched problem, with the best approximation provided
by an  $O(n^{1/4})$ approximation factor algorithm in \cite{BCCFV2010}. Algorithms that use semi-definite programming have been proposed by Goemans (as mentioned in \cite{FKP2001}) and in \cite{S99findingdense}. These algorithms promise an approximation ratio of $O(n/k)$. Note that a PTAS for the dense $k$-subgraph has been ruled out in \cite{Khot} under a certain complexity assumption.

\ignore{
The dense $k$-subgraph problem is well-researched problem, with the best approximation provided
by an  $O(n^{1/4})$ approximation factor algorithm in \cite{BCCFV2010}.
Prior work includes an approximation factor of $n^{\delta}, \delta < 1/3$ that
has been achieved in \cite{FKP2001}.
Allowing general non-negative benefits  on edges has
also been considered, adding a multipicative factor of $\log n$ to the approximation.
Note that a PTAS for the dense $k$-subgraph has been ruled out in \cite{Khot}
under a certain complexity assumption.
Algorithms that use semi-definite programming have been proposed by Goemans \cite{FKP2001}
and in \cite{S99findingdense}. These algorithms find approximation that are a constant factor
of the optimal for large value of $k$. In fact,  they promise an approximation ratio
of $O(n/k)$. A  PTAS can be obtained when $k= \Omega(n)$ and the number of
edges is $\Omega(n^2)$ \cite{AKK95}.
Other dense subgraph problems have also been considered in \cite{KS2009,Charikar00}.\\
}

The problem can be  formulated as a $0$-$1$ Quadratic Program:
$$\begin{array}{rlll}
&maximize & \sum_i b(v_i)x_i\ \ + \sum\limits_{i,j:v_iv_j \in E(G)} b(v_iv_j)x_ix_j\\
&\text{such that}&
\sum_i w(v_i)x_i \leq W&\\
&&x_i\in\{0,1\}&\\
\end{array}$$
Replacing the term $x_ix_j$ by an integer variable $x_{ij} \in \{0, 1 \}$ and adding the constraints $x_{ij} \leq \frac{x_i + x_j}{2}$ and $x_{ij} \geq \frac{x_i + x_j-1}{2}$ also gives an integer linear program (ILP) for the problem. Without loss of generality, we can assume that $G=K_n$, since we can give non-edges benefit zero.
This quadratic program is equivalent to the well-studied Quadratic Knapsack Problem (QKP) \cite{GHS1980} (see~\cite{P2007} for a survey):
$$\begin{array}{rlll}
&maximize & \sum\limits_{i=1}^{n}\sum\limits_{j =1}^{n} b_{ij}x_ix_j\\
&\text{such that}&
\sum\limits_{i=1}^n w_ix_i \leq W&\\
&&x_i\in\{0,1\}&\\
\end{array}$$
(To account for $b(v_i)$, add a variable $x_{i'}$ with $b_{ii'}=b(v_i)$ and $w_{i'}=0$.)  However, the benefits in QKP are typically assumed to be non-negative~\cite{P2007}.

%As stated above, the problem studied in this paper, $GKP$ is
%essentially the {\em Quadratic Knapsack Problem} (QKP) with negative and non-negative benefits.
The QKP is an important problem which has been mostly studied from the LP-based exact algorithms point of view~\cite{P2007,GHS1980}. Rader and Woeginger \cite{RW2002} developed a FPTAS for the case when all benefits are non-negative and the underlying graph is series parallel. They also show that when QKP has both negative and non-negative benefits, it cannot have a constant factor approximation unless $P=NP$.

%The current authors show in an earlier paper~\cite{KKP11} that a natural greedy algorithm can achieve an approximation ratio of  $(16\min(n,W)/t)$ for any fixed $t$ in polynomial time, and an approximation ratio of  $(16 \min(n,W)/\log n)$  in linear quasi-polynomial time for general QKP. That paper also gives a FPTAS when the underlying graph has bounded tree-width, generalizing the result of Rader and Woeginger.

The idea of using discrete structures like graphs, digraphs, or posets to generalize the classical knapsack problem by modeling some sort of dependency among the items is not a new one. However all such generalizations of the Knapsack problem (described below)
restrict the choice of subset of items that can be picked.

\ignore{
While our model does not restrict the choices directly, instead it modifies
the benefit function so that the benefit on the edge between a pair of items
could act as a penalty (if its negative) or an inducement (if its positive)
towards the choice of those two items.
Appropriate choice of benefits on the edges can mimic some variants of the Knapsack Problem.
}

The {\em Knapsack Problem with Conflict Graph} (KPCG) is a knapsack problem where each edge in the underlying conflict graph on the items introduces the constraint that at most one of those two items can be chosen. This can be modeled as the Graphical Knapsack problem by putting large negative benefit on the edges of the conflict graph and using that as the underlying graph for GKP.
The KPCG was first studied by Yamada et al. \cite{YKW2002} who analyzed some greedy algorithms for KPCG.
%Hifi and Michrafy~\cite{HM2007} give a computational study of branch and bound based exact algorithms.
Pferschy and Schauer~\cite{PS2009} give exact dynamic programs which can be implemented as FPTAS when the underlying conflict graph is a chordal graph or has bounded tree width.

%Some other such Knapsack Problems include: the {\em Constrained Knapsack Problem} (~\cite{BHW2009}), the {\em Precedence-Constrained Knapsack Problem} (~\cite{KS2007}, ~\cite{SY2000}, ~\cite{KPP2004}), the {\em Subset-Union Knapsack Problem} (~\cite{GNY1994}).

Borradaile, Heeringa, and Wilfong~\cite{BHW2009} study two versions of
the {\em Constrained Knapsack Problem} (CKP) in which dependencies between items are given by a graph. In the first version, an item can be selected only if at least one of its neighbors is also selected. In the second version, an item can be selected only when all its neighbors are also selected.
%The second version can be modeled as a HKP with each closed neighborhood of the graph corresponding to an hyperedge with benefit equal of that sum of the
%benefits in CKP of all the vertices while individual vertices have zero benefit in HKP.
They give upper and lower bounds on the approximation ratios for both these problems on undirected and directed graphs.

Similarly, the {\em Precedence-Constrained Knapsack Problem} (PCKP) introduces a directed acyclic graph (or, equivalently a poset) on the items so that a item can be chosen if and only if all of its predecessors have already been chosen (selected items should form an ideal in the poset).
See Kolliopoulos and Steiner~\cite{KS2007}, Samphaiboon and Yamada~\cite{SY2000}, Kellerer, et.~al~\cite{KPP2004} for further discussion.

In the {\em Subset-Union Knapsack Problem} (SUKP), each item (with benefit) is a subset of a ground set of elements (each with a weight). SUKP asks to find a set of elements with maximum benefit that fits in the knapsack. This corresponds to HKP with hyperedges corresponding to a item in SUKP and vertices corresponding to the elements of SUKP. Goldschmidt, et. al~\cite{GNY1994} show that SUKP is NP-hard and give an exact algorithm that runs in exponential time.

\subsection{Our Results and an Outline}

In this paper we describe a polynomial-time randomized algorithm that approximates
PIQP to within a factor $O(a_{max} n^{1/2} \log^{2+\gamma} n)$  w.h.p, when the benefit
function $b$ is non-negative, $\gamma$ is a very small constant and $a_{max}$ is the
maximum co-efficient in the constraint matrix.

%This is the first polynomial-time algorithm for PIQP QKP with approximation ratio better than $O(n)$ with the caveat that
%weights are bounded above by $O(n^{1/2-\epsilon})$.

The approximation to the PIQP problem results from a combination of two algorithms, one which works well for small values of $W=\max_i \{ W_i \}$ with an $O(\min(n,W)/t)$-approximation bound (for any fixed $t>0$), and the other a randomized algorithm which gives a $O(a_{max} \frac{n}{\beta(n)} \log^{2 + \gamma}n)$ approximation bound where $a_{max}$ is the maximum size of an entry in the constraint matrix and  $\beta(n)  \leq \min_i{W_i} $, where
$W_i$ are the constant terms that define the constraint inequalities and $\gamma$ is a small constant.

\ignore{
We apply the above algorithm  to
give an approximation of $O^*(n^{1/(1+t)} )$ for the $k$-DSP problem for sparse graphs
having  maximum degree at most $n^{1/t}$, for any integer $t >1$.
The $O^*$ notation is used in the standard way to hide logarithmic terms.
}

Our primary aim in this paper is to illustrate the use of the following main tools as compared to the typical application of Semi-definite Programs and Chernoff bounds in problems of this nature:
\begin{enumerate}

\item A second-order cone program  (a convex program with hyperbolic constraints).
While a similar result can be obtained via general semi-definite programming,
SOCP has more efficient solutions (see the discussion in ~\cite{lbvl98}, where it is said that ``solving SOCPs via SDP is not a good idea, interior-point  methods that solve the SOCP directly have a much better worst-case complexity than a SDP method'').

\item A combination of the Kim-Vu bounds on concentration of polynomial random variables, and Chernoff-Hoeffding concentration bounds.

\end{enumerate}

Section~2 presents a simplified version of PIQP, and some preliminaries for the analysis of the algorithm.
Section~3 presents an analysis of a natural greedy method that gives an $O(\min(n,W)/t)$-approximation algorithm (for any fixed $t>0$).
In Section~4, we give the complete description and analysis of our main algorithm with $O(a_{max} \frac{n}{\beta(n)} \log^{2 + \gamma}n)$ approximation. This bound can be thought as a generalization of the $O(n/k)$-approximation algorithm for the $k$-DSP (~\cite{FKP2001}, ~\cite{S99findingdense}), although our methodology is different.
In Section~5 we combine the algorithms from sections~3 and 4 to provide the general approximation algorithm.

The appendix contains an algorithm for the  solving the generalized knapsack problem in the case of bounded treewidth. Since series-parallel graphs have tree-width at most~2, our algorithm generalizes the one from Rader and Woeginger~\cite{RW2002}.

\section{Preliminaries}

\subsection{Preliminaries-A}
We study the following class of quadratic integer programs where $X \in \mathbb{Z}^n, X= (x_1 , \ldots x_n)$:
\begin{eqnarray*}
\max  \  X^T B X &+& c^TX\\
{\rm subject \ to}\\
  a_i^TX &\leq& W_i, \ \ i=1, \ldots p \\
 x_i &\in& \{ 0, 1 \}
\end{eqnarray*}
for bounded  number of constraints $p$ and
positive integer vectors $a_i$ and $c$ and the
matrix $B \in \mathbb{Z^+}^{n \times n}$.
We term the above as a {\em Positive Integer Quadratic Program} (PIQP).
W.l.o.g we will assume that $B$ is a symmetric matrix.

The objective function
has both quadratic and linear terms.
%as $\sum_i b(x_i)x_i\ \ + \sum_{i} b(x_i,x_i)x_i^2 + \sum\limits_{i,j} b(x_i,x_j)x_ix_j$ and the
%linear constraints as $\sum_{j}a_{ij}x_j \leq W_i, \; i=1, \ldots p $.
We can interpret optimizing the  linear term
% as well as the
%quadratic term $\sum_{i} b(x_i,x_i)x_i^2$
as the classical 0-1 Knapsack problem with multiple constraints for which there is
a PTAS \cite{books/knapsackProblems}.
Separating the linear and quadratic terms  will introduce a constant factor in the
approximation bound. Hence
we will restrict our attention to the quadratic objective
function $\sum\limits_{i \neq j} b_{ij}x_ix_j$.

%\noindent{\bf DO we need to say more here? Write the new/simplified formulation of PIQP? Do we need a reference for PTAS (or is it FPTAS) for 0-1Knapsack with multiple constraints?}\\

For simplicity we will
transform the above inequalities by appropriate scaling
such that  $ \hat{a}_i^TX \leq W, \ \ i=1, \ldots p$,
where $W = \max_i \{W_i \}$ and  $\hat{a}_{ij} = \lceil a_{ij}W/W_i \rceil$.
Note that $\hat{X}_i \cdot \frac{a_{ij}W/W_i}{\lceil a_{ij}W/W_i \rceil}$ is
a solution to the original problem iff $\hat{X}$ is a solution to the scaled problem.
Let $\hat{x}^*_{ij}$ and $x^*_{ij}$
be the optimum solution to the scaled and original problems, respectively.
Since $\frac{\lceil a_{ij}W/W_i \rceil}{a_{ij}W/W_i} \leq 2$,
we get that $x^*_{ij}/2$ is a solution to the scaled problem.
We thus obtain that  the optimum solution to the scaled problem
is within a factor of $4$ of the optimum solution to the
original problem.

Furthermore,
we can assume that $a_{ij} + a_{ik} \leq W$ for every pair of indices $(j,k)$. 
Otherwise, the benefit $b_{ij}$ can be set to $0$ as it is not possible to set 
both variables $x_i$ and $x_j$ to $1$. 
Thus we can assume that $x_k$ and $ x_l$ are such 
that $(x_k,x_l) = \arg \max_{(x_i,x_j)} b_{ij}$ can be set to $1$ . 
%{\bf we need to move this assumption about $a_{ij} + a_{ik} \leq W$ earlier? Here we should be just giving the definition of the solution.}

The transformed problem, which we term PIQP also, is: \\

%{\bf We need a name for this problem because this is what we use in Section 4, not PIQP. Also the following two restrictions are not on PIQP but on this scaled problem}

\begin{eqnarray*}
\max  \  \sum\limits_{i,j} b_{ij}x_ix_j\\
{\rm subject \ to}\\
  a_i^TX &\leq&  W e, \ \ i=1, \ldots p \\
 x_i &\in& \{ 0, 1 \}
\end{eqnarray*}
where $e$ is the vector $(1, 1 \ldots ,1)$. We also denote
the matrix $[a_{ij}]$ by $A$, with the $i$th  row of the matrix $A$
corresponding to the constraint specified by the vector $a_i$.

We next define two restrictions of the PIQP  problem.
Let $X_A = \{ x_j | \exists i , \  a_{ij} > W/2 \}$
\begin{itemize}

\item{PIQPS Problem:}
This problem is a version of the PIQP under the assumption
that $a_{ij} \leq W/2$. This gives
us a simplified version of the PIQP problem by assigning all variables
in $X_A$ to be of value $0$.

\item{PIQPR Problem:}
This problem is obtained by  considering PIQP
with the additional requirement that for every variable $x_j$ there
exists a co-efficient in the constraint matrix, $a_{ij} > W/2$.
This gives
us a version of the problem where only one variable in each constraint
can be set to be $1$. In this problem all variables in $X-X_A$ are set  to be zero.

\end{itemize}

We justify this choice of  additional restrictions below:
\begin{claim}
\label{PIQPSRclaim}
An $\alpha$-approximation to $PIQPR$ and $PIQPS$ problems gives an
$\alpha$-approximation to the PIQP problem.
\end{claim}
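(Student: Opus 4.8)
The plan is to decompose an optimal PIQP solution along the partition $X = X_A \cup (X\setminus X_A)$. Write $b(S)$ for the benefit of the subgraph induced by a set $S$ of variables (i.e.\ the quantity $\sum_{j,k\in S}b_{jk}$), let $S^*$ achieve the optimum $\mathrm{OPT}$ of PIQP, and set $S_A = S^*\cap X_A$, $S_B = S^*\setminus X_A$. Since all benefits are non-negative,
$$\mathrm{OPT} \;=\; b(S_A) \;+\; b(S_B) \;+\; 2\!\!\sum_{j\in S_A,\,k\in S_B}\!\! b_{jk},$$
and every term on the right is $\ge 0$. The set $S_A$ is feasible for the PIQPR instance and $S_B$ is feasible for the PIQPS instance, so $b(S_A)\le \mathrm{OPT}_{PIQPR}$ and $b(S_B)\le \mathrm{OPT}_{PIQPS}$; the whole content of the claim is in controlling the cross term.

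First I would record the fact that tames the cross term: by the earlier preprocessing, two variables of $X_A$ that are heavy for the same constraint have benefit $0$ between them, and, more to the point, no feasible solution can set two such variables to $1$ at once (their two coefficients already exceed $W$). Assigning each variable of $S_A$ to a constraint for which it is heavy then shows $|S_A|\le p = O(1)$. Since $p$ is constant I can enumerate in time $n^{O(p)}$ every set $T\subseteq X_A$ with $|T|\le p$; for each, form a residual instance on $X\setminus X_A$ that keeps the original quadratic benefits there, replaces each bound by $W - a_i^{\mathsf T}\chi_T$, and adds the linear term $\sum_{k\in X\setminus X_A}\bigl(\sum_{j\in T}2b_{jk}\bigr)x_k$ (which a zero-weight ``always on'' dummy variable turns into a purely quadratic objective, exactly as in the reduction of $b(v_i)$ to $b_{ii'}$ used earlier). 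Every surviving variable still has all coefficients $\le W/2$, so this residual is a PIQPS instance; running the given $\alpha$-approximation on it and adding the exactly known quantity $b(T)$ yields a feasible PIQP solution, and for the choice $T = S_A$ its value is at least $b(S_A)+\tfrac1\alpha\bigl(b(S_B)+2\sum_{j\in S_A,k\in S_B}b_{jk}\bigr)\ge \tfrac1\alpha\,\mathrm{OPT}$. Taking the best solution produced over all $T$, together with the PIQPR approximation as a safeguard, gives an $\alpha$-approximation for PIQP.

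The main obstacle is precisely the cross term: it is \emph{not} dominated by $b(S_A)+b(S_B)$ — a single heavy variable joined to many light variables makes it arbitrarily larger than either — so the naive ``approximate PIQPR, approximate PIQPS, output the better'' does not work, and one must use $|S_A|=O(1)$ to fold the cross edges into a single residual PIQPS instance. The remaining point to check carefully is that this residual genuinely qualifies as a PIQPS instance despite its shrunken bounds $W - a_i^{\mathsf T}\chi_T$ — equivalently, that the $\alpha$-approximation for PIQPS depends only on the coefficients being at most half the bound and not on the bound's magnitude (and that the degenerate case of a tiny residual bound, in which only $O(1)$ further variables can be added, is handled trivially). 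If one prefers to avoid the $n^{O(p)}$ enumeration, the cross term can instead be charged to $p$ single-vertex ``star'' subproblems, each a $p$-dimensional knapsack admitting a PTAS, with $b(S_A)$ handled via PIQPR, at the cost of an extra constant factor.
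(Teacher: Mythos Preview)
Your approach is \emph{not} the paper's. The paper's proof is precisely the naive argument you explicitly reject: it splits the optimal solution into $S_1 = S^* \cap X_A$ and $S_2 = S^* \setminus X_A$, asserts without further justification that ``the benefit provided by one of $S_1$ and $S_2$ alone is at least half of the optimum solution,'' and concludes $\max\{OPT(\mathrm{PIQPS}),\,OPT(\mathrm{PIQPR})\} \ge OPT/2$. It never mentions the cross term $2\sum_{j\in S_1,\,k\in S_2} b_{jk}$. Your star example (one heavy vertex joined by positive-benefit edges to many light vertices, with no other benefit) shows this assertion is false as written; so you have correctly identified a gap in the paper's own argument, and your proposal repairs it rather than reproduces it.

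Your repair --- observing that $|S_A|\le p$ because two variables heavy for the same constraint cannot both be~$1$, then enumerating the $n^{O(p)}$ candidate sets $T\subseteq X_A$ and solving a residual PIQPS for each --- is more elaborate but sound in outline. (Amusingly, a commented-out block in the paper's source sketches essentially this $O(n^p)$ enumeration, so the authors may have been aware of the issue before opting for the shorter, flawed version.) What the paper buys with its version is brevity; what yours buys is correctness, at the cost of the $n^{O(p)}$ factor and the technical caveat you already flag: the residual bounds $W - a_i^{\mathsf T}\chi_T$ may drop below $W$, so the PIQPS hypothesis ``every coefficient is at most half the bound'' can fail for the residual instance, and you must either argue the PIQPS algorithm tolerates this or split off the few constraints that become tight. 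Your alternative star-decomposition route (charging the cross edges to $p$ multi-constraint knapsacks) avoids that caveat cleanly at the price of an extra $O(p)$ factor.
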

\begin{proof}
Let $X_A = \{ x_j | \exists i , \  a_{ij} > W/2 \}$
The optimal solution to PIQP, $OPT$ can be partitioned into two subset of variables,
$OPT= S1 \cup S2, S1 \subseteq X_A, S2 \subseteq X- X_A$ (where by overloading notations,
$X$ is used to denote the set of all the variables),
that are assigned the value $1$.
%$S1 \subseteq B(A)$ and $$.
The benefit obtained
by $S1$ is bounded by $OPT(PIQPR)$ and the benefit obtained by $S2$ is bounded
above by $OPT(PIQPS)$. The benefit provided by one of $S1$ and $S2$  alone
is at least half of the
optimum solution.  Thus  $\max \{ OPT(PIQPS) , OPT(PIQPR) \} \geq OPT/2$
Thus an $\alpha$-approximation to $PIQPS$ and $PIQPR$ gives an $\alpha$-approximation
to $PIQP$.
\end{proof}

\ignore{
\begin{claim}
\label{PIQPSClaim}
In $O(n^p)$ steps we can reduce the PIQP problem to the problem PIQPS,
where all coefficients $a_{ij}$ are less than $W/2$ and
$OPT(PIQP) \leq 2* OPT(PIQPS)$
\end{claim}
\begin{proof}
Suppose  that PIQP contains coefficients that
do not satisfy the required property, i.e. $a_{lj} > W/2$ for some $l,j$.
Let $X_A = \{ x_j | \exists i , \  a_{ij} > W/2 \}$. Given a  subset $Y$
of $X_A$, a solution that corresponds to $Y$ is denoted by $X(Y)$ and is
defined by $x_i = 1  \ iff \ x_i \in X(A) $.
Let ${\cal S}(A)$ be the set of feasible solutions generated from subsets
of $X_A$, i.e.
\[{\cal S}(A) = \{ X(Y) \} | Y \subseteq X_A ,
AX(Y) \leq W \}\]
%, \ \mbox{where}  \  x_i = 1 \ \mbox{iff} \ x_i \in X(A)
%\}
%\]
Note that  at most one variable per inequality $\sum_{j} a_{ij}x_j \leq W$
can be set to $1$ in a feasible solution $X(Y)$ that corresponds to $Y \subseteq X_A$.
Thus there are at most $O(n^p)$ sets in ${\cal S}(A)$.

Furthermore, note that $X(Y), Y= \Phi$ is a trivial feasible solution. In this case
we consider  an additional problem, $P1$, where we optimize an instance of PIQPS
%over the set of variables in
%$X - X(A)$,
with variables in $X(A)$ set to zero.
In this new problem all co-efficients are less than $W/2$.

We let the optimum benefit obtained from solutions  in ${\cal S}(A) - X(\Phi) $
have benefit $Opt({\cal S}(A))$
and when $Y=\Phi$, then the optimum solution obtained by  solving $P1$,
is termed $OPT(P1)$.

The optimal solution to PIQP, $OPT$ can be partitioned into two subset of variables,
$OPT= S1 \cup S2, S1 \subseteq X_A, S2 \subseteq X- X_A$ (where by overloading notations,
$X$ is used to denote the set of all the variables),
that are assigned the value $1$.
%$S1 \subseteq B(A)$ and $$.
The benefit obtained
by $S1$ is bounded by $Opt({\cal S} (A))$ and the benefit obtained by $S2$ is bounded
above by $OPT(P1)$. The benefit provided by one of $S1$ and $S2$  alone
is at least half of the
optimum solution.  Thus considering the $O(n^p)$ solutions provided by ${\cal S}(A)$
together with one additional problem, P1, of the type PIQPS suffices.
\end{proof}
}
Furthermore, we have the following property of any maximal solution to PIQPS, i.e.
a solution where no additional variable can be set to value $1$ without violating
the constraints:
\begin{lemma}
\label{everyW/2}
Every maximal solution to PIQPS  is either the trivial solution where
$x_j =1 , \forall j$ or the solution uses at least $W/2$ of the budget in
at least one of the constraints.
\end{lemma}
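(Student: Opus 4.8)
The plan is to argue by contraposition on the two alternatives: suppose a maximal solution $X$ to PIQPS is \emph{not} the trivial all-ones solution, and show that then some constraint must already have at least $W/2$ of its budget consumed. So assume there is a variable $x_j$ with $x_j = 0$ in $X$. Since $X$ is maximal, setting $x_j := 1$ must violate at least one constraint, i.e.\ there is an index $i$ with
\[
a_i^T X + a_{ij} > W .
\]
Because we are working with PIQPS, every coefficient satisfies $a_{ij} \le W/2$, so in particular $a_{ij} \le W/2$. Combining this with the displayed inequality gives $a_i^T X > W - a_{ij} \ge W - W/2 = W/2$. Hence the $i$-th constraint already uses more than $W/2$ of its budget under $X$, which is exactly the second alternative in the statement.

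The key steps, in order, are: (1) dispose of the trivial case by assuming $X \neq (1,\dots,1)$ and picking a zero variable $x_j$; (2) invoke maximality to get a violated constraint when $x_j$ is flipped to $1$; (3) use the PIQPS hypothesis $a_{ij} \le W/2$ to lower-bound the already-used budget $a_i^T X$ by $W/2$. I would state step (3) carefully since it is the only place the defining assumption of PIQPS ($a_{ij} \le W/2$ for all $i,j$) is used, and it is what makes the bound $W/2$ — rather than something weaker — come out.

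There is no real obstacle here; the only subtlety worth a sentence is making sure the quantifiers are right: maximality of $X$ means that for \emph{every} $x_j = 0$, flipping it violates \emph{some} constraint $i = i(j)$, and the conclusion only needs one such pair $(j,i)$, so a single non-trivial $X$ suffices to exhibit the constraint that is at least half-full. (If one wants the strict inequality $> W/2$ rather than $\ge W/2$, it follows directly from the strict violation $a_i^T X + a_{ij} > W$; I would phrase the lemma's conclusion and this argument consistently.)
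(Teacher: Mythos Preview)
Your proof is correct and is essentially the same argument as the paper's: both use the defining PIQPS bound $a_{ij}\le W/2$ together with maximality to show that if not all variables are $1$, then flipping some $x_j=0$ to $1$ forces an already-at-least-half-full constraint. The only cosmetic difference is that the paper phrases it as a direct contradiction (assume every constraint uses $<W/2$ and the solution is not all-ones, then exhibit a feasible flip), whereas you phrase it as contraposition starting from a zero variable; the content is identical.
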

\begin{proof}
We know  by assumption that all coefficients $a_{ij}$ are less than $W/2$.
\ignore{
Suppose not, i.e. $a_{lj} > W/2$ for some $l,j$.
Let $B(A) = \{ x_j | \exists i , \  a_{ij} > W/2 \}$.
Let ${\cal S}(B)$ be the set of solutions generated by picking some subset
of variables in $B(A)$ to be $1$ and all other variables zero, i.e.
\[{\cal S}(B) = \{ S(B)=\{x_{j1}, x_{j2} \ldots x_{jl} \} | S(B) \subseteq B(A) ,
AX \leq W , \ \math{where}  \  x_i = 1 \math{iff} x_i \in S
\}
\]
Note that  at most one variable per inequality $\sum_{j} a_{ij}x_j \leq W$
can be set to $1$ in each of the solutions corresponding to $S(B)$.
}
%We can solve a reduced problem by picking a subset of variables
%$S_B= \{ x_j | x_j \in B(A) \}$  which are feasible,
%i.e. $ \sum_{j} a_{ij} \leq W , \ \forall i$.

Now suppose there exists a solution, $S$, such
that $\sum_{j} a_{ij}x_j < W/2,  \ \forall i$.
Let $S = \{ x_j | x_j =1 \}$.
Then  assuming that all variables are not already assigned the value $1$
%i.e. $S \neq X$,
we can assign another variable $x_k \notin S$
to be $1$ without violating any of the inequalities. This proves the lemma.
\end{proof}

\subsection{Preliminaries-B}

Since we will be dealing with non-linear random-variables, an important tool
for us is the polynomial concentration bound of
Kim-Vu \cite{kimvu:concentration,vu:con2} that applies to any multi-linear random
variables defined on a underlying hypergraph with edges of bounded size. For
our application, the underlying hypergraph will be 2-uniform, that is, it
will be a simple graph with no loops.

Let $Y = \sum\limits_{e\in E(G)} b_{e} \prod\limits_{v\in e} y_v$ where $G$ is
an underlying graph, $b$ is non-negative weight function on the edges,
and $y_v$ are independent $0-1$ random variables.

Define $\varepsilon_0 = {\bf E}[b(Y)]$, $\varepsilon_1 = \max_v (\sum_{u\in N_G(v)} b_{uv} {\bf P}[y_u =1])$, and $\varepsilon_2 = \max_{uv\in E(G)} b_{uv}$.

\begin{theorem}{[J-H. Kim, Van Vu, 2001]}\\
${\bf P}[{\bf E}[Y] - Y > t^2] < 2e^2 \;\; exp\left(-t/32(2\varepsilon\varepsilon')^{1/4} + \log n\right)\;\;$ where $\varepsilon= \max \{\varepsilon_0, \varepsilon_1, \varepsilon_2 \}$, and $\varepsilon'= \max \{\varepsilon_1, \varepsilon_2 \}$.
\end{theorem}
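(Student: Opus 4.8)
The plan is to recognize the statement as the degree-two special case of Kim and Vu's general polynomial concentration inequality, so that proving it amounts to (i) matching our parameters $\varepsilon_0,\varepsilon_1,\varepsilon_2$ with the quantities that drive their bound and (ii) checking that our random variable satisfies their hypotheses; for completeness I would also sketch the argument specialized to a simple graph. First I would observe that $Y=\sum_{uv\in E(G)}b_{uv}y_uy_v$ is a multilinear polynomial of degree exactly $2$ in the independent indicators $y_v$, and that $\varepsilon_0,\varepsilon_1,\varepsilon_2$ are precisely the maxima of the expectations of its order-$0$, order-$1$, and order-$2$ (formal, i.e.\ multilinear) partial derivatives: $\partial Y/\partial y_v=\sum_{u\in N_G(v)}b_{uv}y_u$ has expectation $\sum_{u\in N_G(v)}b_{uv}{\bf P}[y_u=1]$, while $\partial^2 Y/\partial y_u\partial y_v=b_{uv}$, and $\varepsilon_0={\bf E}[Y]$. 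Hence $\varepsilon=\max\{\varepsilon_0,\varepsilon_1,\varepsilon_2\}$ and $\varepsilon'=\max\{\varepsilon_1,\varepsilon_2\}$ are exactly the parameters that appear in the $k=2$ form of the Kim-Vu theorem, and the stated tail bound is that case verbatim.

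To prove the $k=2$ case I would expose the variables one at a time and study the Doob martingale $Z_0={\bf E}[Y]$, $Z_i={\bf E}[Y\mid y_1,\dots,y_i]$, $Z_n=Y$. A short computation gives $Z_i-Z_{i-1}=(y_i-p_i)L_i$, where $p_i={\bf P}[y_i=1]$ and $L_i=\sum_{j<i}b_{ij}y_j+\sum_{j>i}b_{ij}p_j\ge 0$ is measurable with respect to $\mathcal{F}_{i-1}:=\sigma(y_1,\dots,y_{i-1})$. The crude estimate $|Z_i-Z_{i-1}|\le L_i\le\sum_j b_{ij}$ fed into Azuma's inequality is off by roughly a factor $\sqrt n$; the gain comes from also retaining the conditional second moment ${\bf E}[(Z_i-Z_{i-1})^2\mid\mathcal{F}_{i-1}]\le p_iL_i^2$ and invoking a Bernstein/Freedman-type martingale tail inequality governed by $\sum_i p_iL_i^2$ together with $\max_i L_i$.

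The main obstacle is that $\sum_i p_iL_i^2$ is itself a random low-degree polynomial in the $y_j$'s, so one cannot simply substitute a deterministic bound; this is exactly where Kim and Vu's \emph{induction on the degree} enters. One controls the fluctuation of each linear form $L_i$ (equivalently, of each first partial derivative of $Y$) by applying the same concentration statement one degree lower, whose parameters are in turn bounded by $\varepsilon_1$ and $\varepsilon_2$; splitting into the typical event where all the $L_i$ lie near their means and its exponentially unlikely complement, and reparametrizing the deviation as $t^2$, then yields ${\bf P}[{\bf E}[Y]-Y>t^2]<2e^2\exp(-t/32(2\varepsilon\varepsilon')^{1/4}+\log n)$. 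Since in every application below $G$ is a simple graph (a $2$-uniform loopless hypergraph), $b$ is nonnegative, and the $y_v$ are independent $0$-$1$ variables, the hypotheses hold without modification, so it suffices to invoke \cite{kimvu:concentration,vu:con2}; no further work is required here.
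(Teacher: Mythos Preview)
Your proposal is correct, and in fact goes well beyond what the paper does: the paper does not prove this theorem at all but simply states it as a known result of Kim and Vu with citation to \cite{kimvu:concentration,vu:con2}. Your closing sentence---that it suffices to invoke those references and that no further work is required---is precisely the paper's stance; the martingale/induction sketch you give is additional exposition not present in the paper.
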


We will also use the following standard bounds of Chernoff and Hoeffding~\cite{C52,H63}:

\begin{theorem}{[Chernoff-Hoeffding Bound-1]}
Let $X_1 , X_2 , \ldots X_n$ be independent $0$-$1$  random variables.
Denote $X= \sum_{i=1}^n a_i X_i$, where $a_i \in [0,1]$, and let $E(X)=\mu$. Then
${\bf P}[X \geq (1+\delta)\mu] \leq e^{-(\mu \delta^2)/3} $ when $0 < \delta \leq 1$.
%(i) ${\bf P}[X \geq (1+\delta)\mu] \leq (\frac{e^{\delta}}{(1+\delta)^{(1+\delta)}})^\mu$.

%(ii) ${\bf P}[X \leq (1-\delta)\mu] \leq (\frac{e^{\delta}}{(1+\delta)^{(1+\delta)}})^\mu$.

%(ii) ${\bf P}[X \leq (1 - \delta)\mu] \leq e^{-(\mu \delta^2)/2} $ when $0 < \delta < 1$.

\end{theorem}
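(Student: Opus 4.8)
The plan is to run the standard exponential-moment (Bernstein/Chernoff) argument and then reduce the resulting expression to the clean bound $e^{-\mu\delta^2/3}$ via a short calculus estimate. First I would apply Markov's inequality to the random variable $e^{tX}$ for an as-yet-unspecified parameter $t>0$, writing
\[
\mathbf{P}[X \geq (1+\delta)\mu] \;=\; \mathbf{P}\bigl[e^{tX} \geq e^{t(1+\delta)\mu}\bigr] \;\leq\; \frac{\mathbf{E}[e^{tX}]}{e^{t(1+\delta)\mu}} .
\]
By independence of the $X_i$, the numerator factorizes as $\mathbf{E}[e^{tX}] = \prod_{i=1}^n \mathbf{E}[e^{t a_i X_i}]$. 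Writing $p_i = \mathbf{P}[X_i=1]$, each factor equals $1 + p_i(e^{t a_i}-1)$, and using $1+x \le e^x$ together with the convexity bound $e^{t a_i}-1 \le a_i(e^{t}-1)$ (valid because $a_i\in[0,1]$ and $x\mapsto e^{tx}$ is convex), I get $\mathbf{E}[e^{t a_i X_i}] \le \exp\bigl(a_i p_i (e^{t}-1)\bigr)$. Multiplying over $i$ and using $\mu = \sum_i a_i p_i$ yields $\mathbf{E}[e^{tX}] \le \exp\bigl(\mu(e^{t}-1)\bigr)$.

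Combining these gives $\mathbf{P}[X \geq (1+\delta)\mu] \le \exp\bigl(\mu[(e^{t}-1) - t(1+\delta)]\bigr)$ for every $t>0$. The next step is to optimize the exponent in $t$: the function $t\mapsto (e^t-1)-t(1+\delta)$ is minimized at $t = \ln(1+\delta)$, which is positive since $\delta>0$. Substituting this choice yields the classical bound
\[
\mathbf{P}[X \geq (1+\delta)\mu] \;\leq\; \left( \frac{e^{\delta}}{(1+\delta)^{1+\delta}} \right)^{\!\mu}.
\]
Finally I would show that $\dfrac{e^{\delta}}{(1+\delta)^{1+\delta}} \le e^{-\delta^2/3}$ for all $\delta\in(0,1]$, which is equivalent to the scalar inequality $(1+\delta)\ln(1+\delta) \ge \delta + \delta^2/3$. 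Setting $g(\delta) = (1+\delta)\ln(1+\delta) - \delta - \delta^2/3$, one checks $g(0)=0$ and that $g'(\delta) = \ln(1+\delta) - \tfrac{2}{3}\delta \ge 0$ on $[0,1]$ (since $g'(0)=0$ and $g''(\delta) = \tfrac{1}{1+\delta} - \tfrac23$ is nonnegative precisely for $\delta \le 1/2$, while a direct check handles $\delta\in[1/2,1]$, where $\ln(1+\delta)$ stays above $\tfrac23\delta$). Raising to the $\mu$-th power gives the claimed bound.

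I expect the only delicate point to be the last scalar inequality on $(0,1]$: the cutoff $\delta\le 1$ in the statement is exactly what makes the crude constant $1/3$ work, and one has to be a little careful near $\delta = 1$ rather than relying solely on the second-derivative sign, which changes at $\delta = 1/2$. Everything before that — Markov's inequality, the product bound, and the optimization over $t$ — is entirely routine.
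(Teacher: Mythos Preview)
Your argument is correct: it is the standard exponential-moment derivation of the multiplicative Chernoff bound, and the final scalar inequality $(1+\delta)\ln(1+\delta)\ge \delta+\delta^2/3$ indeed holds on $(0,1]$ by the analysis you sketch (with $g'(1)=\ln 2-\tfrac23>0$ handling the endpoint). There is nothing to compare against, however, because the paper does not supply a proof of this statement at all; it simply quotes it as a standard Chernoff--Hoeffding bound with a citation to the original references and uses it as a black box later in the analysis. So your proposal is not an alternative to the paper's proof but rather a self-contained justification of a result the paper takes for granted.
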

and
\begin{theorem}{[Chernoff-Hoeffding Bound-2]}
Let $X_1 , X_2 , \ldots X_n$ be independent $0$-$1$  random variables.
Denote $X= \sum_{i=1}^n  X_i$,% where $a_i \in [0,1]$, 
and let $E(X)=\mu$. Then
${\bf P}[X- \mu \geq t) \leq e^{-2nt^2} $.% when $0 < \delta \leq 1$.
\end{theorem}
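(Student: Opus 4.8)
The plan is to prove this via the standard Chernoff--Hoeffding moment-generating-function method. First I would fix $s>0$ and apply the exponential Markov inequality to the centered sum: since $x\mapsto e^{sx}$ is increasing and positive,
\[
{\bf P}[X-\mu \geq \tau] \;\leq\; e^{-s\tau}\,{\bf E}\!\left[e^{s(X-\mu)}\right].
\]
Because the $X_i$ are independent, the expectation factorizes as ${\bf E}[e^{s(X-\mu)}]=\prod_{i=1}^n {\bf E}[e^{s(X_i-\mu_i)}]$, where $\mu_i={\bf E}[X_i]$ and $\sum_i \mu_i = \mu$. This reduces the problem to bounding the moment-generating function of a single centered bounded variable.

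The key analytic ingredient is Hoeffding's lemma: for a random variable $Y$ supported on an interval $[a,b]$ with ${\bf E}[Y]=0$, one has ${\bf E}[e^{sY}]\leq e^{s^2(b-a)^2/8}$. I would apply this to each centered variable $X_i-\mu_i$, which takes values in an interval of length at most $1$ since $X_i\in\{0,1\}\subseteq[0,1]$. This yields ${\bf E}[e^{s(X_i-\mu_i)}]\leq e^{s^2/8}$ for each $i$, and therefore ${\bf E}[e^{s(X-\mu)}]\leq e^{ns^2/8}$. Substituting back gives the family of bounds ${\bf P}[X-\mu\geq\tau]\leq \exp\!\left(-s\tau+ns^2/8\right)$, valid for every $s>0$.

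It remains to optimize over the free parameter $s$. The factor of $n$ appearing in the claimed exponent $2nt^2$ comes from the normalization implicit in the statement: the deviation $t$ is measured on the scale of the empirical mean $\overline{X}=X/n$, so the corresponding sum-deviation is $\tau=nt$. Minimizing the exponent $-s(nt)+ns^2/8$ over $s>0$ gives the optimal choice $s^{\ast}=4t$, at which the exponent equals $-4nt^2+2nt^2=-2nt^2$. This produces exactly ${\bf P}[\overline{X}-{\bf E}[\overline{X}]\geq t]={\bf P}[X-\mu\geq nt]\leq e^{-2nt^2}$, the stated bound; the whole derivation is a verbatim instance of Hoeffding's inequality as in~\cite{H63}.

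I expect the only genuine subtlety to be this scaling that yields the factor $n$ (rather than $1/n$) in the exponent: one must be careful that $t$ records a per-coordinate (mean-scale) deviation, equivalently that the raw sum $X$ deviates by $nt$, since for the unnormalized sum-deviation $t$ the same computation would instead give $e^{-2t^2/n}$. Everything else is routine, and I would invoke Hoeffding's lemma as a black box rather than reprove it, since its proof is just a convexity estimate of $e^{sy}$ on $[0,1]$ combined with a second-order bound on the resulting cumulant-generating function.
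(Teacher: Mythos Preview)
Your approach is correct and is precisely the standard proof of Hoeffding's inequality. However, the paper does not actually prove this statement: it is quoted as a known result with a citation to Chernoff~\cite{C52} and Hoeffding~\cite{H63}, and no proof is given in the paper itself. So there is nothing to compare against beyond noting that your derivation is the textbook one.

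Your observation about the scaling is well taken and worth flagging. As literally written, the inequality ${\bf P}[X-\mu\geq t]\leq e^{-2nt^2}$ for the \emph{sum} $X=\sum_i X_i$ cannot be right (take $n$ large, $t$ fixed); the Hoeffding argument you give yields $e^{-2t^2/n}$ for a sum-scale deviation $t$, and $e^{-2nt^2}$ only when $t$ is the deviation of the empirical mean $X/n$. If you look at how the paper actually invokes this bound in the budget analysis (Case~2 of Section~\ref{budgetanalysis}), the exponent that appears is of the form $e^{-2t^2/(n a_{max}^2)}$, i.e., with $n$ in the denominator, which is consistent with the sum-scale Hoeffding bound rather than the displayed statement. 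So the theorem as stated appears to carry a typographical slip in the exponent, and your reading of the intended normalization is the correct one.
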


For our purpose, as we will be dealing with weighted sums, the following form of Chernoff-Hoeffding bound is appropriate.

\begin{cor}
Let $X_1 , X_2 , \ldots X_n$ be independent $0$-$1$  random variables.
Denote $X= \sum_{i=1}^n a_i X_i$, where $a_i \geq 0$, and let $E(X)=\mu$. Then, for $a_{max} = \max \{a_i\}$,
$ {\bf P}[X \geq (1+\delta)\mu] \leq e^{-(\mu \delta^2)/(3 a_{max})} $ when $0 < \delta \leq 1$.

%(ii) $ {\bf P}[X \leq (1 - \delta)\mu] \leq e^{-(\mu \delta^2)/2} $ when $0 < \delta < 1$.

\end{cor}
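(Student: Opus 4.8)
The plan is to reduce the corollary to the standard multiplicative Chernoff--Hoeffding bound (Theorem~[Chernoff-Hoeffding Bound-1]) by rescaling the weighted sum so that all coefficients lie in $[0,1]$. Set $Y_i = X_i$ and define $Y = \sum_{i=1}^n (a_i/a_{max}) X_i = X/a_{max}$, noting that each coefficient $a_i/a_{max}$ lies in $[0,1]$ since $a_i \geq 0$ and $a_{max} = \max_i\{a_i\}$. Then $E[Y] = \mu/a_{max} =: \mu'$, and $X \geq (1+\delta)\mu$ holds if and only if $Y \geq (1+\delta)\mu'$.

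First I would apply Theorem~[Chernoff-Hoeffding Bound-1] to the random variable $Y$ with its coefficients in $[0,1]$ and mean $\mu'$. For $0 < \delta \leq 1$ this gives
$$ {\bf P}[Y \geq (1+\delta)\mu'] \leq e^{-(\mu' \delta^2)/3} = e^{-(\mu \delta^2)/(3 a_{max})}. $$
Translating back via $Y = X/a_{max}$ yields ${\bf P}[X \geq (1+\delta)\mu] \leq e^{-(\mu\delta^2)/(3a_{max})}$, which is exactly the claimed bound. The degenerate case $a_{max}=0$ (all $a_i = 0$) makes $X \equiv 0$ and the statement is vacuous, so we may assume $a_{max} > 0$.

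There is essentially no obstacle here: the only thing to be careful about is that the multiplicative Chernoff bound is scale-invariant in the sense that normalizing all weights by a common positive factor $a_{max}$ rescales both $X$ and $\mu$ identically, leaving the event $\{X \geq (1+\delta)\mu\}$ unchanged while bringing the coefficients into the range $[0,1]$ required by Theorem~[Chernoff-Hoeffding Bound-1]. The factor $a_{max}$ in the exponent is precisely the price paid for this normalization, since the exponent $\mu'\delta^2/3$ becomes $\mu\delta^2/(3a_{max})$ after substituting $\mu' = \mu/a_{max}$.
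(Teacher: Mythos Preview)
Your proposal is correct and is exactly the intended derivation: the paper states this result as a corollary of the preceding Chernoff--Hoeffding Bound-1 without giving an explicit proof, and the implied argument is precisely the rescaling $Y = X/a_{max}$ that brings the coefficients into $[0,1]$ and then invokes that theorem. There is nothing to add.
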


\section{A Greedy Approach}

\def\cO{{\cal O}}

\ignore{
Given an instance of a convex quadratic program of the following form:
\begin{eqnarray*}
\max  \  \sum\limits_{i,j:x_ix_j \in E(G)} b(x_i,x_j)x_ix_j\\
{\rm subject \ to}\\
  a_i^TX &\leq& W_i, \ \ i=1, \ldots p \\
 x_i &\in& \{ 0, 1 \}
\end{eqnarray*}
for bounded number of constraints $p$.

For simplicity, we write the objective function as $\sum_i b(v_i)x_i\ \ + \sum\limits_{i,j:v_iv_j \in E(G)} b(v_iv_j)x_ix_j$ and the linear constraints as $\sum_{i,j}a_{ij}x_j \leq W, \; i=1, \ldots p $. Note that we have changed the the budget bounds on the right to a uniform $W$ by straightforward scaling. Observe that $p=1$ corresponds to QKP. The support of function $b$ can thought of as the underlying graph $G$.

As discussed for QKP, we will restrict our attention to the quadratic portion of the objective function. The problem of optimizing the problem with b(e) = 0 for all edges is simply the classical 0-1 knapsack problem with $p$ budget constraints which can be solved using a FPTAS for bounded $p$. This changes the overall approximation factor by at most a constant factor. So we will assume the objective function is $\sum\limits_{i,j:v_iv_j \in E(G)} b(v_iv_j)x_ix_j$.
}

%
%STARTS FROM HERE
%

In this section we consider the two restricted problems $PIQPS$ and $PIQPR$, 
as defined in Section~2. 
\begin{eqnarray*}
\max  \  \sum\limits_{i,j} b_{ij}x_ix_j\\
{\rm subject \ to}\\
  a_i^TX &\leq&  W e, \ \ i=1, \ldots p \\
 X_i &\in& \{ 0, 1 \}
\end{eqnarray*}
where $e$ is the vector $(1, 1 \ldots ,1)$.
In PIQPS, $a_{ij} \leq W/2$ $\forall i,j$.
And in PIQPR $\forall i, \  \exists j \ \mbox{s.t.} a_{ij} > W/2$.  
In the PIQPS problem we will assume that the trivial solution where  every variable
is set to $1$ is not  possible else we can trivially return the optimal solution.

We will interpret this quadratic optimization program with zero diagonal entries as a
directed graph optimization problem, $GKPM$ (Generalized Knapsack with Multiple Constraints), 
with each variable $x_i$ corresponding to a vertex $v_i$ and $b_{ij}$ 
corresponding to the benefit on the edge $v_iv_j$.
% ($b(v_iv_j)$ or $b(e)$ where $e=v_iv_j$). 
The support of function $b$ can be thought as the edges in the
underlying graph $G$. The coefficient $a_{ij}$ is the weight associated with $v_j$ in the $i$th constraint.

We define a combined weight for each vertex $i$ as $w_i = \sum_j a_{ji}$, the sum of all 
the weights in column $i$ (associated with the variable $x_i$) in all the budget constraints. 
Let $W$ be as specified in the problems PIQPS and PIQPR. 
Note that this gives us a multi-constraint version of an instance of $GKP(G,b,w,W)$ as defined in the introduction, which we term $GKPM$.

Define $w(T) = \sum_{v_i \in T} w_i$ for every subset of vertices $T$ for the rest of this discussion.

Now, using this definition of weight, the greedy algorithm can be defined naturally as:
Fix an integer $t \geq 2$.
\begin{enumerate}
\item Initialize $S = \emptyset$.
\item Pick a subset $T$ of $V(G)-S$ of cardinality at most $t$ such that its benefit
(the sum of the benefits of the  edges incident to vertices in $T$ in $S \cup T$)
to weight (sum of the combined weights, $\sum_{i \in T} w_i$) ratio is highest.
\item Update $S = S \cup T$ if weight of $S \cup T$ satisfies each of the $p$ budget constraints, and then go to step 2. Otherwise pick whichever of $S$ or $T$ has larger benefit as the final solution as long as the set  satisfies all the budget constraints.
\end{enumerate}

Note that any final solution given by this algorithm will satisfy all the constraints.
This follows since for any choice of $T$ it is easy to check the feasibility of $T$ and
$S \cup T$.
% ({\bf Does this change in the definition of greedy in third step (must satisfy all constraints else reject) make any difference?})\\

The following is true from Lemma~\ref{everyW/2} and by the  definition of PIQPR.
\begin{claim} Every solution generated by the greedy algorithm  for
PIQPR and PIQPS uses at least $W/2$ of the budget in at least one the constraints.
\end{claim}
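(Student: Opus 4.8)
I would prove the two cases separately, since for PIQPR the claim is essentially immediate from the defining property of the problem, whereas for PIQPS it reduces to Lemma~\ref{everyW/2}. First I would record that the greedy procedure terminates: each vertex has combined weight $w_i=\sum_j a_{ji}\ge p\ge 1$ because the constraint vectors $a_i$ are positive, so every successful pass through Step~3 strictly increases $w(S)$, while $w(S)=\sum_j\sum_{v_i\in S}a_{ji}\le pW$ for any feasible $S$. Hence after finitely many rounds the run ends, either with $S=V(G)$ or through the ``Otherwise'' branch of Step~3, where we have a current feasible set $S$ and a candidate $T$ with $|T|\le t$ such that $S\cup T$ violates some constraint. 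Let $R\in\{S,T\}$ be the set the algorithm actually returns.

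For PIQPR, observe first that $R\neq\emptyset$: in the first round $S=\emptyset$, and by the scaling carried out in Section~2 the pair $(x_k,x_l)=\arg\max_{i,j} b_{ij}$ is feasible, so (as $t\ge 2$) Step~2 has a nonempty feasible candidate and that round returns a nonempty feasible set, and $R$ only grows afterwards. Now take any variable $x_j$ set to $1$ in $R$; by the definition of PIQPR there is a constraint $i$ with $a_{ij}>W/2$, so $R$ consumes more than $W/2$ of the budget of constraint $i$. That settles PIQPR.

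For PIQPS every coefficient satisfies $a_{ij}\le W/2$ and the all-ones assignment is assumed infeasible, so the run must end through the ``Otherwise'' branch with $S\cup T$ violating some constraint $i$, i.e.\ $\sum_{v_j\in S}a_{ij}+\sum_{v_j\in T}a_{ij}>W$. The key point is that $R$ is a \emph{maximal} feasible solution: under the natural reading of Step~2 (a round is only skipped when no feasible extension of size $\le t$, in particular no single vertex, can be added to $S$) the halting set $S$ is maximal, and in any case, since all $b_{ij}\ge 0$, replacing $R$ by any maximal feasible superset $R'\supseteq R$ does not decrease the benefit, so one may assume $R$ is maximal. Lemma~\ref{everyW/2} then says such an $R$ is either the all-ones solution, which is excluded by assumption, or uses at least $W/2$ of some constraint's budget, which is exactly the claim.

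The one place that needs care is this maximality step for PIQPS: Step~2 ranks size-$\le t$ sets only by benefit-to-weight ratio and Step~3 may discard the chosen $T$, so one must justify that the returned set cannot be feasibly enlarged. The safe route is the $b\ge 0$ argument above; alternatively one can reason directly from $\sum_{v_j\in S}a_{ij}+\sum_{v_j\in T}a_{ij}>W$ together with $a_{ij}\le W/2$ that at least one of $S,T$ already uses more than $W/2$ of constraint $i$, and then check from the tie-breaking rule of Step~3 that the set actually returned has this property in each case $R=S$ and $R=T$. I expect that small case analysis — reconciling the informal statement of Step~3 with the two possibilities for $R$ — to be the main (and only) obstacle; everything else is a direct appeal to Lemma~\ref{everyW/2} and to the definitions.
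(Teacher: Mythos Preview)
Your approach matches the paper's exactly: the paper's entire proof is the one-line remark ``The following is true from Lemma~\ref{everyW/2} and by the definition of PIQPR,'' with no further argument. You correctly fill in the PIQPR case from the defining property, and for PIQPS you appeal to Lemma~\ref{everyW/2} just as the paper does---but you go further by noticing the maximality issue (the paper silently assumes the greedy output is maximal), and your $b\ge 0$ superset fix is the right way to close that gap for the downstream use of the claim (namely $w(A(I))\ge W/2$).
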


\ignore{
\begin{proof}
%{\bf What we really want from this claim is that $w(A(i)) \geq W/2$.}
We will assume w.l.o.g that all coefficients $a_{ij}$ are less than $W/2$.
Suppose not, i.e. $a_{lj} > W/2$ for some $l,j$.
Let $B(A) = \{ x_j | \exists i , \  a_{ij} > W/2 \}$.
We can solve a reduced problem by picking a subset of variables
$S_B= \{ x_j | x_j \in B(A) \}$  which are feasible,
i.e. $ \sum_{j} a_{ij} \leq W , \ \forall i$.
Note that this implies that at most one variable per inequality $\sum_{j} a_{ij}x_j \leq W$
can be set to $1$.
The number of such sets is bounded by $C(n,p)= O(n^p)$. Assigning value $1$ to
variables in feasible sets $S_B$ and $0$ to other variables in $B(A)$ results in
a problem where all coefficients are now $ \leq W/2$. There are $O(n^p)$ such problems
and the best of these solutions gives an optimal solution.

For problems with $a_{ij} \leq W/2, \forall i,j$, suppose there exists a solution such
that $\sum_{j} a_{ij}x_j \leq W/2,  \ \forall i$.
Then  assuming, naturally, that all variables are not already assigned,
we can assign another variable
to be $1$ without violating any of the inequalities. This proves the Claim.
\end{proof}
}

For an instance $I$ of $GKPM$, let $A(I)$ be an approximate solution  generated by the greedy method
and let $C(I) = A(I) \cap \cO(I)$, where $\cO(I)$ is an optimal solution.
W.L.O.G. we assume that $G$ is a complete graph by defining $b(e)=0 , \forall e \notin E(G)$.

Note that in general $A(I)$ and $\cO(I)$ will overlap and this intersection between the solutions makes it difficult to analyze the relative worth of the two solutions. To overcome this difficulty we use $A(I)$ and $\cO(I)$ to define a new instance of $GKPM$ which has disjoint greedy and optimal solutions with its greedy solution same as $A(I)$. This makes it easier to analyze and indirectly bound the benefits of the original  $A(I)$ and $\cO(I)$.

%\noindent{\bf Technically, an instance of GKPM must give each $a_{ij}$ but we are just giving $w_i$. Are we really working with GKPM or with a version of GKP? GKP
%??--Notclear why Do we need to do the scaling we do in Section 4 here itself?}\\

We define a transformation to an instance $I' = (G',b',w',W)$ as follows:
Define a  new set of vertices $C'= \{ v' | v \in C(I)\}$, i.e., the vertex set $C'$ is a 
duplicate of $C(I)$. 
In instance $I'$, we let $A=A(I)$ and $O'= (O(I) \setminus C(I)) \cup C'$
%Let $\cO'$ denote the vertices obtained as  $(\cO(I) \setminus C(I)) \cup C'$. 
Note that $A(I) \cup \cO' \setminus C' = A(I) \cup \cO(I)$.
\begin{enumerate}
\item
$V(G') = V(G) \cup C'$
%Similarly $E(C')$ is a double version of $E(C)$, i.e.
%\item
%$E(G') = E(G) \cup E(C') \cup $ where $E(C')= \{ e' = (u',v') | e=(u,v) \in E(C)$,

\item The benefits for the vertices are all zero (as illustrated above).

\ignore{
assigned as:
\[
b'(v) = \left\{
\begin{array}{l l}
b(v) & \mbox{ if $v \in A(I) \cup \cO'\setminus C'$} \\
0 & \mbox{otherwise} \\
\end{array}
\right.
\]
}

\item
The benefits for the edges are
\[
b'(e) = \left\{
\begin{array}{l l}
b(e) & \mbox{ if $e= uv \text{ s.t.} \ u,v \in A(I)$} \\
b(e) & \mbox{if $e=uv \text{ s.t.}  \ u \in \cO' \text{ and } v \in \cO'\setminus C', \text{ or }  v \in \cO' \text{ and } u \in \cO'\setminus C' $} \\
0 & \mbox{otherwise} \\
\end{array}
\right.
\]

Note that this means we are assigning zero benefit to the edges whose both vertices are in $C'$ to avoid duplication of benefits assigned to edges within $C(I)$. The total benefit of $E(G')$ is the same as the total benefit of $E(G)$.

\item
$w'(v') = w(v) , \ \forall v' \in C'$, i.e.
the copies of the vertices in $C'$ have the
same weight on each vertex as in $C(I)$.
All other vertices $v$ have the same weight in $G'$ as in $G$.

\end{enumerate}

We consider two cases:
\begin{enumerate}

\item[Case 1]
$ b(C(I)) \geq  b(\cO(I))/2$.
In this case $b(A(I)) \geq b(C(I)) \geq b(\cO(I))/2$. Thus $\frac{b(\cO(I))}{b(A(I))} \leq 2$.

\item[Case 2]
$ b(C(I)) <  b(\cO(I))/2$.
Here, $ b'(\cO') = b(\cO (I)) - b(C(I)) > b(\cO(I))/2$. We consider this case in more detail below.

\end{enumerate}

Let $A(I')$ be an approximate solution provided by the greedy algorithm in the instance
$I'$. We show that there exists a sequence of choices made by the greedy such
that $A(I')=A(I)$.

\begin{lemma}
For solution $A(I)$ provided by the greedy algorithm  in instance $I$, the greedy
algorithm can also produce the  solution $A(I)$ on instance $I'$.
\end{lemma}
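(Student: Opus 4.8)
Let $T_1,\dots,T_m$ be the sets chosen, in order, by the run of the greedy algorithm on $I$ that yields $A(I)$, and put $S_k = T_1\cup\cdots\cup T_k$, so every $S_k \subseteq A(I)$ and $A(I)$ is either $S_m$ or $S_{m-1}$. The plan is to show by induction on $k$ that the greedy algorithm on $I'$ can choose exactly $T_1,\dots,T_k$ in its first $k$ steps (breaking ties suitably); running this through step~3 of the algorithm then yields $A(I)$ as the output on $I'$ as well. For a set $T$ and current solution $S$, write $g_J(T\mid S)$ for the total benefit, in instance $J$, of the edges incident to $T$ inside $S\cup T$, so that the quantity the greedy maximizes over size-$\le t$ sets $T\subseteq V(G)\setminus S$ is $g_J(T\mid S)/w(T)$.

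Two elementary facts about the modified benefits drive the argument. First, for every edge $e$ with both endpoints in $V(G)$ we have $b'(e)\le b(e)$ (in fact $b'(e)\in\{0,b(e)\}$), and $b'(e)=b(e)$ whenever both endpoints lie in $A(I)$; since every $T_k$ and every $S_{k-1}$ lies in $A(I)$, this gives $g_{I'}(T_k\mid S_{k-1}) = g_I(T_k\mid S_{k-1})$ for all $k$. Second, each copy $v'\in C'$ has $w'(v')=w(v)$, and in $I'$ the only positive-benefit edges at $v'$ go to vertices of $\cO(I)\setminus C(I)=\cO'\setminus C'$, each carrying the same benefit as the corresponding edge at the original $v$; in particular every edge from $C'$ to $S_{k-1}$ (which lies in $A(I)$, disjoint from $\cO(I)\setminus C(I)$) and every edge inside $C'$ has benefit $0$.

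The crux is ruling out that the duplicated vertices $C'$ hand the greedy a strictly better candidate at some step. Fix $k$ and suppose the run on $I'$ has produced $S_{k-1}$. Take any $T'\subseteq V(G')\setminus S_{k-1}$ with $|T'|\le t$; write $T'=T'_V\cup T'_C$ with $T'_V\subseteq V(G)$ and $T'_C\subseteq C'$, let $\widehat{T'_C}\subseteq C(I)$ be the originals of $T'_C$, and set $\bar T = (T'_V\cup\widehat{T'_C})\setminus S_{k-1}$. Then $\bar T$ is a legal greedy candidate in $I$: $|\bar T|\le t$ and $\bar T\cap S_{k-1}=\emptyset$. By the second fact, every positive-benefit edge counted in $g_{I'}(T'\mid S_{k-1})$ either has both endpoints in $V(G)$ and is incident to $T'_V\subseteq\bar T$ inside $S_{k-1}\cup\bar T$, or joins some $v'\in T'_C$ to a vertex of $T'_V$ and maps to the edge at the original $v\in\widehat{T'_C}$, incident to $\bar T$ inside $S_{k-1}\cup\bar T$; these images are pairwise distinct and have benefit in $I$ at least as large, whence $g_{I'}(T'\mid S_{k-1})\le g_I(\bar T\mid S_{k-1})$. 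Also $w(\bar T)\le w(T'_V)+w(\widehat{T'_C}) = w(T'_V)+w(T'_C) = w(T')$. Therefore $\frac{g_{I'}(T'\mid S_{k-1})}{w(T')} \le \frac{g_I(\bar T\mid S_{k-1})}{w(\bar T)} \le \frac{g_I(T_k\mid S_{k-1})}{w(T_k)} = \frac{g_{I'}(T_k\mid S_{k-1})}{w(T_k)}$, the middle inequality because $T_k$ is a maximum-ratio choice in $I$ at this step. (If $\bar T=\emptyset$ then $T'_V=\emptyset$ and all originals of $T'_C$ lie in $S_{k-1}$, so $g_{I'}(T'\mid S_{k-1})=0$ and the bound is trivial.) Thus $T_k$ is still a maximum-ratio candidate on $I'$; break ties in its favor to complete the induction step.

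Finally, $S_{k-1}\cup T_k\subseteq A(I)$ has identical weights in $I$ and $I'$, so the feasibility test of step~3 returns the same verdict on $I'$ as on $I$, and at the terminating step the benefits of $S$ and of $T$ that step~3 compares are unchanged (both subsets of $A(I)$). Hence the greedy on $I'$ can follow the same run and output $A(I)$. The only genuine obstacle is the projection bound of the third paragraph — that the copies $C'$ never present a strictly more attractive candidate — and this reduces to the two benefit/weight bookkeeping facts of the second paragraph.
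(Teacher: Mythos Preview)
Your proof is correct and follows essentially the same inductive projection idea as the paper: at each step, replace the $C'$-part of a candidate by the corresponding originals in $C(I)$ and observe that this cannot hurt the benefit-to-weight ratio. If anything, your version is tighter than the paper's: the paper asserts, without real justification, that the greedy's choice $V_i$ on $I'$ already lies in $A(I)\cup C'$ and hence that the projected set $V_i'$ lands inside $A(I)$; you instead handle an \emph{arbitrary} candidate $T'$, project it to $\bar T\subseteq V(G)$, and chain the inequalities $g_{I'}(T'\mid S_{k-1})/w(T')\le g_I(\bar T\mid S_{k-1})/w(\bar T)\le g_I(T_k\mid S_{k-1})/w(T_k)=g_{I'}(T_k\mid S_{k-1})/w(T_k)$, which closes that gap cleanly.
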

\begin{proof}

We show that at every step of the greedy algorithm on $I'$, the partial solution produced is
the same as that in $I$. Consider the $i$th step of the greedy algorithm, assuming that the claim is true for prior steps. Prior to that step, let $A_i(I')$ be the set of vertices chosen by the greedy in $I'$. By assumption, prior to the $i$th step, the same set of vertices in $A(I)$ will be chosen by the greedy algorithm on $I$ also.

Let $V_i$ be the vertices chosen at step $i$ in $I'$. Note that the Greedy Algorithm does not need to pick anything from $\cO \setminus C'$ since there is as good or better choice in $A(I')$ as per the greedy choice in the instance $I$. Thus, $V_i \subset A(I') \cup C'$.

Let $C_i(I')$ be the set of vertices in $C(I)$ corresponding to $V_i\cap C'$,
and let $V_i'=(V_i-C')\cup C_i(I')$.

Then $A_i(I')\cup V_i'$ has benefit greater or equal to the benefit of
$A_i(I')\cup V_i$ (since benefits are zero within $C'$), and the same or less weight as $A_i(I')\cup V_i$
(less if $C_i(I')$ intersects $V_i\cap C(I)$).  Therefore $V_i'$ is another valid
choice for the greedy algorithm on $I'$ at the $i$th step.  Since $V_i'  \subseteq A(I)$,
the greedy algorithm on $I'$ always has a choice within $A(I)$.

\end{proof}

Given that $b'(\cO') > \frac{b(\cO(I))}{2}$, by Lemma 3.1 and consequently $b(A(I) = b'(A(I'))$, it follows that
\[ \frac{b(A(I))}{b(\cO(I))} \geq \frac{b'(A(I'))}{2b'(\cO')} \]

We now bound the ratio $\frac{b'(A(I'))}{b'(\cO')}$. Let $V_i(I')$ be the set
of vertices chosen at the $i$th step of the greedy algorithm. And let
$K_i = \frac{b'(V_i(I'))}{w'(V_i(I'))} $. 
Note that $b'(A(I')) = {\sum^t}_{i=1} b'(V_i(I'))$ and 
$w'(A(I'))= {\sum^t}_{i=1} w'(V_i(I'))$, i.e. $b'(V_i(I'))$ is the sum
of the enefits of edges incident to vertices in $V_i(I')$
within ${\cup^i}_{j=1} V_j(I')$.
Further, let $K_{min} = \min_i \{ K_i , i = 1 \ldots r \}$
when the greedy executes for $r$ iterations. 

Let $K_{min} = \frac{b'(V_t(I'))}{w'(V_t(I'))}$. 
Then $\frac{b'(V_t(I'))}{w'(V_t(I'))} \leq  \frac{b'(V_i(I'))}{w'(V_i(I'))}$ $\forall i \ne t$. 

This means
$\sum_i w'(V_i(I')) b'(V_t(I')) \leq \sum_i w'(V_t(I')) b'(V_i(I'))$, that is $\frac{b'(V_t(I'))}{w'(V_t(I'))} \leq  \frac{\sum_i b'(V_i(I'))}{\sum_i {w'(V_i(I'))}}$. 
Thus, we have
  
\[ K_{min} \leq \frac{b'(A(I'))}{w'(A(I'))} \]

By the choice of the greedy, and the fact that $A(I') \cap \cO' = \emptyset$,
\[ b'(T) \leq K_{min} w'(T) , \ \ \forall T \subseteq \cO' , |T| = t \]

Note that $b'(T) = \sum_{e \in T} b'(e) $.
Adding up the inequalities for all such $T \subseteq \cO'$(Note that all such
$T$ are feasible) gives
\begin{equation}
{{|\cO'|-2} \choose {t-2}} \sum_{e \in \cO'} b'(e)  \leq
K_{min} {{|\cO'|-1} \choose {t-1}} \sum_{v \in \cO'} w'(v)
\end{equation}

Since $b'(\cO') = \sum_{e \in \cO'} b'(e)$

\[ {{|\cO'|-2} \choose {t-2}} b'(\cO') \leq K_{min} {{|\cO'|-1} \choose {t-1}} w'(\cO'), \]
which yields
\[ b'(\cO') < \frac{|\cO'|}{t-1} K_{min} w'(\cO') \]
and
\[ \frac{b(\cO(I))}{2} < b'(\cO')  < \frac{|\cO'|b(A(I'))}{(t-1)\cdot w(A(I'))} w(\cO') \]

Further, since
$w(\cO') \leq pW, b(A(I'))=b(A(I))$ and $w(A(I')) = w(A(I)) \geq W/2$ we get
\[ \frac{b(\cO(I))}{b(A(I))} \leq  \frac{2}{t-1} |\cO(I)| \frac{pW}{w(A(I))} \leq  \frac{4p}{t-1} |\cO(I)| \]

Since the weight $w(v)$ of each vertex is assumed to be positive (and hence at least 1 under our integrality assumptions), $|\cO(I)| \leq W$ and $ n$, we get the two bounds $\frac{4pn}{t-1}$ and $\frac{4pW}{t-1}$.\\

In the end, we obtain (by change of parameter from $t-1$ 
with $t \ge 2$ to $t'$ with $t' \ge 1$) and using lemma~{}:
\begin{theorem}
\label{greedythm}
For a fixed $t\geq 1$, the greedy algorithm gives a $(8p \min(n,W)/t)$-factor
polynomial time ($O(2^{t+1}{{n+1} \choose {t+1}})$-running time) approximation algorithm
for {\bf PIQPR, PIQPS} and hence for {\bf PIQP}, where $W = \max_i W_i$.

%{\bf Note change from GKPM} %when $b$ is  non-negative. 
\end{theorem}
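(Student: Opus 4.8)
The plan is to assemble the pieces already in place: the reduction of PIQP to the two restricted problems PIQPS and PIQPR (Claim~\ref{PIQPSRclaim}), the reformulation of each as an instance of $GKPM$, and the density-based analysis of the greedy through the auxiliary instance $I'$. So fix an instance $I$ of $GKPM$ arising from a PIQPS or PIQPR instance, run the greedy, let $A(I)$ be its output and $C(I)=A(I)\cap\cO(I)$. Split into the two cases already isolated: if $b(C(I))\ge b(\cO(I))/2$, then $b(A(I))\ge b(C(I))\ge b(\cO(I))/2$ and we are done with ratio $2$; otherwise pass to $I'$, where by construction $b'(\cO')>b(\cO(I))/2$ and, by Lemma~3.1, the greedy on $I'$ can reproduce $A(I)$, so that $b(A(I))/b(\cO(I)) \ge b'(A(I'))/(2b'(\cO'))$ and $|\cO'| = |\cO(I)|$.

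Next I would carry out the core averaging argument on $I'$. Letting $K_i = b'(V_i(I'))/w'(V_i(I'))$ be the benefit-to-weight ratio of the $i$th greedy pick and $K_{\min}$ their minimum, the weighted mediant inequality gives $K_{\min}\le b'(A(I'))/w'(A(I'))$. The key step is the greedy-optimality inequality $b'(T)\le K_{\min}w'(T)$ for every size-$t$ subset $T\subseteq\cO'$: here one uses that $\cO'$ is feasible (so each such $T$, and the relevant partial solution unioned with it, is feasible), that $A(I')\cap\cO'=\emptyset$ so $T$ is genuinely an admissible greedy candidate, and that the greedy always selects the maximum-ratio candidate. Summing these inequalities over all $\binom{|\cO'|}{t}$ subsets $T$ and double-counting — each edge of $\cO'$ lies in $\binom{|\cO'|-2}{t-2}$ of them and each vertex in $\binom{|\cO'|-1}{t-1}$ of them — collapses to $\binom{|\cO'|-2}{t-2}b'(\cO')\le K_{\min}\binom{|\cO'|-1}{t-1}w'(\cO')$, hence $b'(\cO')< \frac{|\cO'|}{t-1}K_{\min}w'(\cO')$.

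Then I would plug in the bookkeeping bounds: $w'(\cO')=\sum_{v\in\cO'}\sum_{i=1}^p a_{iv}\le pW$ since $\cO'$ satisfies all $p$ constraints; $w'(A(I'))=w(A(I))\ge W/2$ by the Claim that every greedy solution for PIQPS/PIQPR uses at least $W/2$ of some budget; and $b'(A(I'))=b(A(I))$ by Lemma~3.1. Chaining these through $b(\cO(I))/b(A(I))\le 2b'(\cO')/b'(A(I'))$ gives $b(\cO(I))/b(A(I)) \le \frac{2}{t-1}|\cO(I)|\cdot\frac{pW}{w(A(I))}\le \frac{4p}{t-1}|\cO(I)|$. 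Since each vertex has weight at least $1$, $|\cO(I)|\le W$, and trivially $|\cO(I)|\le n$, so this is $\le \frac{4p\min(n,W)}{t-1}$. Taking the maximum with the ratio $2$ from Case~1, absorbing the constant-factor losses from the scaling normalization of Section~2 and the PIQPS/PIQPR split, and reparametrizing $t'=t-1\ge 1$, yields the claimed $8p\min(n,W)/t'$ bound. For the running time: each round of the greedy examines all $O(\binom{n+1}{t+1})$ subsets of size at most $t$, for each checking feasibility of the two relevant sets against the $p$ constraints, and there are at most $n$ rounds; this is polynomial for fixed $t$ and consistent with the stated $O(2^{t+1}\binom{n+1}{t+1})$ figure.

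The step I expect to be the main obstacle is the greedy-optimality inequality $b'(T)\le K_{\min}w'(T)$ for all size-$t$ subsets $T\subseteq\cO'$: one must verify that at the round realizing $K_{\min}$ every such $T$ is actually an admissible candidate (so that the maximum-ratio choice forces the inequality), which is precisely where the disjointness $A(I')\cap\cO'=\emptyset$, the feasibility of subsets of $\cO'$, and the "uses at least $W/2$ of some budget" property must all be invoked together. Everything else — the binomial double-counting, the passage between $I$ and $I'$, and the tracking of constants — is routine.
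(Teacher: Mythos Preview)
Your proposal is correct and follows the paper's proof essentially step for step: the split into Cases 1 and 2 via $b(C(I))$, the passage to the auxiliary instance $I'$ through Lemma~3.1, the mediant bound $K_{\min}\le b'(A(I'))/w'(A(I'))$, the greedy-optimality inequality $b'(T)\le K_{\min}w'(T)$ for size-$t$ subsets $T\subseteq\cO'$, the binomial double-count, and the bookkeeping $w'(\cO')\le pW$, $w'(A(I'))\ge W/2$, $|\cO(I)|\le\min(n,W)$, followed by the reparametrization $t'\!=\!t-1$. You have also correctly singled out the one genuinely delicate step (verifying that every such $T$ is an admissible greedy candidate at the round realizing $K_{\min}$, which is exactly where disjointness and feasibility are used); the paper handles this point just as tersely as you anticipate.
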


\ignore{
\section{A Greedy Approach}

\def\cO{{\cal O}}

\ignore{
Given an instance of a convex quadratic program of the following form:
\begin{eqnarray*}
\max  \  \sum\limits_{i,j:x_ix_j \in E(G)} b(x_i,x_j)x_ix_j\\
{\rm subject \ to}\\
  a_i^TX &\leq& W_i, \ \ i=1, \ldots p \\
 X_i &\in& \{ 0, 1 \}
\end{eqnarray*}
for bounded number of constraints $p$.

For simplicity, we write the objective function as $\sum_i b(v_i)x_i\ \ + \sum\limits_{i,j:v_iv_j \in E(G)} b(v_iv_j)x_ix_j$ and the linear constraints as $\sum_{i,j}a_{ij}x_j \leq W, \; i=1, \ldots p $. Note that we have changed the the budget bounds on the right to a uniform $W$ by straightforward scaling. Observe that $p=1$ corresponds to QKP. The support of function $b$ can thought of as the underlying graph $G$.

As discussed for QKP, we will restrict our attention to the quadratic portion of the objective function. 
The problem of optimizing the problem with $b_{e} = 0$ for all edges is simply the classical 0-1 knapsack problem with $p$ budget constraints which can be solved using a FPTAS for bounded $p$. This changes the overall approximation factor by at most a constant factor. So we will assume the objective function is $\sum\limits_{i,j:v_iv_j \in E(G)} b(v_iv_j)x_ix_j$.
}

%
%STARTS FROM HERE
%

In this section we consider the following simplified form, PIQP-S.
\begin{eqnarray*}
\max  \  \sum\limits_{i,j} b(x_i,x_j)x_ix_j\\
{\rm subject \ to}\\
  a_i^TX &\leq&  W e, \ \ i=1, \ldots p \\
 X_i &\in& \{ 0, 1 \}
\end{eqnarray*}
where $e$ is the unit vector $(1, 1 \ldots ,1)$ and $a_{ij} \leq W/2$.

We will interpret this quadratic optimization program with zero diagonal entries as a
directed graph optimization problem, $GKPM$ (Generalized Knapsack with Multiple Constraints), with each variable $x_i$ corresponding to a vertex $v_i$ and $b(x_i,x_j)$ corresponding to the benefit on the edge $(v_iv_j)$ ($b(x_i,x_j)$ will also be sometimes denoted as $b(v_iv_j)$). The support of function $b$ can be thought as the underlying graph $G$. The co-efficient $a_{ij}$ is the weight associated with $v_j$ in the $i$th constraint.

We define a combined weight for each vertex $i$
as $w_i = \sum_j a_{ji}$, the sum of all the weights associated with the vertex $i$ in all the budget constraints.
Let $W$ be as defined by the constraints in PIQP.
Note that this gives us a multi-constraint version of an instance of $GKP(G,b,w,W)$ as defined in the introduction, which we term $GKP_M$.

Define $w(T) = \sum_{i \in T} w_i$ for every subset of vertices $T$ for the rest of this discussion.

Now, using this definition of weight, the greedy algorithm can be defined naturally as:
Fix an integer $t \geq 2$.
\begin{enumerate}
\item Initialize $S = \emptyset$.
\item Pick a subset $T$ of $V(G)-S$ of cardinality at most $t$ such that its benefit
(the sum of the benefits of the vertices and edges incident to vertices in $T$ in $S \cup T$)
to weight (sum of the combined weights, $\sum_{i \in T} w_i$) ratio is highest.
\item Update $S = S \cup T$ if weight of $S \cup T$ satisfies each of the $p$ budget constraints, and then go to step 2. Otherwise pick whichever of $S$ or $T$ has larger benefit as the final solution as long as the set  satisfies all the budget constraints.
\end{enumerate}

Note that any final solution given by this algorithm will satisfy all the constraints.
This follows since for any choice of $T$ it is easy to check the feasibility of $T$.
% ({\bf Does this change in the definition of greedy in third step (must satisfy all constraints else reject) make any difference?})\\

The following is true from lemma~\ref{everyW/2}.
\begin{claim} Every solution generated by the greedy algorithm uses at least $W/2$ of the budget in at least one the constraints.
\end{claim}
\ignore{
\begin{proof}
%{\bf What we really want from this claim is that $w(A(i)) \geq W/2$.}
We will assume w.l.o.g that all coefficients $a_{ij}$ are less than $W/2$.
Suppose not, i.e. $a_{lj} > W/2$ for some $l,j$.
Let $B(A) = \{ x_j | \exists i , \  a_{ij} > W/2 \}$.
We can solve a reduced problem by picking a subset of variables
$S_B= \{ x_j | x_j \in B(A) \}$  which are feasible,
i.e. $ \sum_{j} a_{ij} \leq W , \ \forall i$.
Note that this implies that at most one variable per inequality $\sum_{j} a_{ij}x_j \leq W$
can be set to $1$.
The number of such sets is bounded by $C(n,p)= O(n^p)$. Assigning value $1$ to
variables in feasible sets $S_B$ and $0$ to other variables in $B(A)$ results in
a problem where all coefficients are now $ \leq W/2$. There are $O(n^p)$ such problems
and the best of these solutions gives an optimal solution.

For problems with $a_{ij} \leq W/2, \forall i,j$, suppose there exists a solution such
that $\sum_{j} a_{ij}x_j \leq W/2,  \ \forall i$.
Then  assuming, naturally, that all variables are not already assigned,
we can assign another variable
to be $1$ without violating any of the inequalities. This proves the Claim.
\end{proof}
}
For an instance $I$ of $GKPM$, let $A(I)$ be an approximate solution  generated by the greedy method
and let $C(I) = A(I) \cap \cO(I)$, where $\cO(I)$ is the optimal solution.
W.l.o.g we assume that $G$ is a complete graph
by defining $b_{e}=0 , \forall e \notin E(G)$.

Note that in general $A(I)$ and $\cO(I)$ will overlap and this intersection between the solutions makes it difficult to analyze the relative worth of the two solutions. To overcome this difficulty we use $A(I)$ and $\cO(I)$ to define a new instance of $GKPM$ which has disjoint greedy and optimal solutions with its greedy solution same as $A(I)$. This makes it easier to analyze and indirectly bound the benefits of the original  $A(I)$ and $\cO(I)$.

%\noindent{\bf Technically, an instance of GKPM must give each $a_{ij}$ but we are just giving $w_i$. Are we really working with GKPM or with a version of GKP? GKP
%??--Notclear why Do we need to do the scaling we do in Section 4 here itself?}\\

Define a  new
set of vertices $C'= \{ v' | v \in C(I)\}$, i.e., the vertex set $C'$ is a duplicate of $C(I)$. Let $\cO'$ denote the vertices obtained as  $(\cO(I) \setminus C(I)) \cup C'$. Note that $A(I) \cup \cO' \setminus C' = A(I) \cup \cO(I)$.
We define a transformation to an instance $I' = (G',b',w',W)$ as follows:
\begin{enumerate}
\item
$V(G') = V(G) \cup C'$
%Similarly $E(C')$ is a double version of $E(C)$, i.e.
%\item
%$E(G') = E(G) \cup E(C') \cup $ where $E(C')= \{ e' = (u',v') | e=(u,v) \in E(C)$,

\item The benefits for the vertices are assigned as:
\[
b'(v) = \left\{
\begin{array}{l l}
b(v) & \mbox{ if $v \in A(I) \cup \cO'\setminus C'$} \\
0 & \mbox{otherwise} \\
\end{array}
\right.
\]

\item
Furthermore, the benefits for the edges are
\[
b'(e) = \left\{
\begin{array}{l l}
b_{e} & \mbox{ if $e= uv \text{ s.t.} u,v \in A(I)$} \\
b_{e} & \mbox{if $e=uv \text{ s.t.}  u \in \cO' \text{ and } v \in \cO'\setminus C', \text{ or }  v \in \cO' \text{ and } u \in \cO'\setminus C' $} \\
0 & \mbox{otherwise} \\
\end{array}
\right.
\]

\item
$w'(v') = w(v) , \ \forall v' \in C'$, i.e.
the copies of the vertices in $C'$ have the
same weight on each vertex as in $C(I)$.
All other vertices $v$ have the same weight in $G'$ as in $G$.

\end{enumerate}

We consider two cases:
\begin{enumerate}

\item[Case 1]
$ b(C(I)) \geq  b(\cO(I))/2$.
In this case $b(A(I)) \geq b(C(I)) \geq b(\cO(I))/2$. Thus $\frac{b(\cO(I))}{b(A(I))} \leq 2$.

\item[Case 2]
$ b(C(I)) <  b(\cO(I))/2$.
Here, $ b'(\cO') = b(\cO (I)) - b(C(I)) > b(\cO(I))/2$. We consider this case in more detail below.

\end{enumerate}

Let $A(I')$ be an approximate solution provided by the greedy algorithm in the instance
$I'$. We show that there exists a sequence of choices made by the greedy such
that $A(I')=A(I)$.

\begin{lemma}
For solution $A(I)$ provided by the greedy algorithm  in instance $I$, the greedy
algorithm can produce the same set of vertices as a solution in $I'$.
\end{lemma}
\begin{proof}

We show that at every step of the greedy algorithm on $I'$, the partial solution produced is
the same as that in $I$. Consider the $i$th step of the greedy algorithm, assuming that the claim is true for prior steps. Prior to that step, let $A_i(I')$ be the set of vertices chosen by the greedy in $I'$. By assumption, prior to the $i$th step, the same set of vertices in $A(I)$ will be chosen by the greedy algorithm on $I$ also.

Let $V_i$ be the vertices chosen at step $i$ in $I'$.
Let $C_i(I')$ be the set of vertices in $C(I)$ corresponding to $V_i\cap C'$,
and let $V_i'=(V_i-C')\cup C_i(I')$.

Then $A_i(I')\cup V_i'$ has benefit greater or equal to the benefit of
$A_i(I')\cup V_i$ (since benefits are zero within $C'$), and the same or less weight as $A_i(I')\cup V_i$
(less if $C_i(I')$ intersects $V_i\cap C(I)$).  Therefore $V_i'$ is another valid
choice for the greedy algorithm on $I'$ at the $i$th step.  Since $V_i'\subseteq A(I)$,
the greedy algorithm on $I'$ always has a choice within $A(I)$.

\end{proof}

Given that $b'(\cO') > \frac{b(\cO(I))}{2}$, it follows that
\[ \frac{b(A(I))}{b(\cO(I))} \geq \frac{b'(A(I'))}{2b'(\cO')} \]

We now bound the ratio $\frac{b'(A(I'))}{b'(\cO')}$. Let $V_i(I')$ be the set
of vertices chosen at the $i$th step of the greedy algorithm. And let
$K_i = \frac{b'(V_i(I'))}{w'(V_i(I'))} $. Further, let $K_{min} = \min_i \{ K_i , i = 1 \ldots r \}$
when the greedy executes for $r$ iterations. Clearly
\[ K_{min} \leq \frac{b'(A(I'))}{w'(A(I'))} \]

By the choice of the greedy, and the fact that $A(I') \cap \cO' = \emptyset$,
\[ b'(T) \leq K_{min} w'(T) , \ \ \forall T \subseteq \cO' , |T| = t \]
Note that $b'(T) = \sum_{v \in T} b'(v) + \sum_{e \in T} b'(e) $.
Adding up the inequalities for all such $T \subseteq \cO'$(Note that all such
$T$ are feasible) gives
\begin{equation}
{{|\cO'|-2} \choose {t-2}} \sum_{e \in \cO'} b'(e) +
{{|\cO'|-1} \choose {t-1}} \sum_{v \in \cO'} b'(v)  \leq
K_{min} {{|\cO'|-1} \choose {t-1}} \sum_{v \in \cO'} w'(v)
\end{equation}

Since $b'(\cO') = \sum_{v \in \cO'} b'(v) + \sum_{e \in \cO'} b'(e)$, we have two possibilities:\\

{\em Case (i)}: $\sum_{e \in \cO'} b'(e) \geq b'(\cO')/2$. Then

\[ {{|\cO'|-2} \choose {t-2}} b'(\cO')/2 \leq K_{min} {{|\cO'|-1} \choose {t-1}} w'(\cO'), \]
which yields
\[ b'(\cO') < \frac{2|\cO'|}{t-1} K_{min} w'(\cO') \]
and
\[ \frac{b(\cO(I))}{2} < b'(\cO')  < \frac{2|\cO'|b(A(I'))}{(t-1)\cdot w(A(I'))} w(\cO') \]

Further, since
$w(\cO') \leq pW, b(A(I'))=b(A(I))$ and $w(A(I')) = w(A(I)) \geq W/2$ we get
\[ \frac{b(\cO(I))}{b(A(I))} \leq  \frac{4}{t-1} |\cO(I)| \frac{pW}{w(A(I))} \leq  \frac{8p}{t-1} |\cO(I)| \]

Since $|\cO(I)| \leq W,\;\; n$, we get the two bounds $\frac{8pn}{t-1}$ and $\frac{8pW}{t-1}$.\\

{\em Case (ii)}: When $\sum_{v \in \cO'} b'(v) \geq  b'(\cO')/2$, the same sort of argument gives $ b'(\cO') \leq 2 K_{min} w'(\cO')$ and eventually leads to
\[ \frac{b(\cO(I))}{b(A(I))} \leq 8p \].

In the end, we obtain (by change of parameter from $t-1$ with $t \ge 2$ to $t'$ with $t' \ge 1$):

\begin{theorem}
\label{greedythm}
The greedy algorithm is a $(16p \min(n,W)/t)$-factor
polynomial time ($O(2^{t+1}{{n+1} \choose {t+1}})$-running time) approximation algorithm
for {\bf PIQP}  {\bf Note change from GKPM} %when $b$ is  non-negative.
where $W = \max_i W_i$.
\end{theorem}

}

\section{Randomized  Algorithm }

In this section we describe a randomized approximation method for the
PIQP problem. We first describe a randomized algorithm to provide an
approximation bound and then  improve that bound
by combining with the bound obtained via the
greedy algorithm.
\ignore{

We consider the following class of
quadratic integer programs where $X \in Z^n, X= (X_1 , \ldots X_n)$:
\begin{eqnarray*}
\max  \  X^T Q X &+& c^TX\\
{\rm subject \ to}\\
  a_i^TX &\leq& W_i, \ \ i=1, \ldots p \\
 X_i &\in& \{ 0, 1 \}
\end{eqnarray*}
for bounded (logarithmic) number of constraints $p$ and
non-negative coefficients in $Q$ and $c$ and in the constraint matrix.
We term the above as a {\em Positive Integer Quadratic Program} (PIQP).

The objective function  can be written
as $\sum_i b(x_i)x_i\ \ + \sum_{i} b(x_i,x_i)x_i^2 + \sum\limits_{i,j} b(x_i,x_j)x_ix_j$ and the
linear constraints as $\sum_{j}a_{ij}x_j \leq W_i, \; i=1, \ldots p $.
We can interpret optimizing the  linear term as well as the
quadratic term $\sum_{i} b(x_i,x_i)x_i^2$
as the classical 0-1 Knapsack problem with multiple constraints for which there is
a PTAS. Solving these problems separately will introduce a constant factor in the bound. Hence
we will restrict our attention to the quadratic objective function
$\sum\limits_{i,j} b(x_ix_j)x_ix_j$.

We will use a greedy algorithm to solve the scaled version of the
Positive Integer Quadratic program
to within a factor of $O( \min(n,W))$. For $W \leq \sqrt{n}$ the greedy algorithm provides the
required  approximation bound.  
%The analysis
%of such a greedy algorithm is given in the appendix.
Note that for $p=1$ the greedy algorithm is exactly the one
described in \cite{KKP11}.

For $W \geq \sqrt{n}$
we will express the above integer quadratic program as  an integer program with
hyperbolic constraints and solve using a relaxation
described in the next  section. We will obtain a solution to the problem with
the original inequalities using the transformation $X^*_i=\hat{X}_i \cdot \frac{a_{ij}W/W_i}{\lceil a_{ij}W/W_i \rceil}$ where $\hat{X}$ is the optimal solution to the problem with the scaled
inequalities.

For the transformed solution, we have an analogous form of lemma~\ref{mainlemma1}:
\begin{lemma}
For each $i = 1 \ldots p$,
$\sum_u a_{ij} \sqrt{x_j^*} \leq 2 \sqrt{2a_{max}} W_i n^{1/4}$,
where $a_{max} = \max_{i,j} a_{ij}$.
\end{lemma}
\begin{proof}
Applying Lemma~\ref{mainlemma1} to the scaled problem we obtain
\[ \sum_j  \lceil a_{ij}\frac{W}{W_i} \rceil  \sqrt{\hat{x_j}} \leq 2 \sqrt{\hat{a}_{max}} W n^{1/4}\]
where $\hat{a}_{max} = a_{max} \max_{ij} \frac{\lceil a_{ij}W/W_i \rceil}{a_{ij}W/W_i}$.
Thus
\[ \sum_j  a_{ij}\frac{W}{W_i} \sqrt{x^*_j\frac{\lceil a_{ij}W/W_i \rceil}{a_{ij}W/W_i}} \leq
2 \sqrt{a_{max} \max_{ij} \frac{\lceil a_{ij}W/W_i \rceil}{a_{ij}W/W_i}} W n^{1/4}\]
which implies the result, since $\frac{\lceil a_{ij}W/W_i \rceil}{a_{ij}W/W_i} \leq 2$
for positive integral $a_{ij}$.

\end{proof}
We can then apply the rounding method as in the previous section to $x^*$. The lemma
above allows us to apply the analysis in Section~\ref{budgetanalysis} to show that the
constraints  are not violated w.h.p. since there only a bounded number of constraints.
The analysis of the objective function is also the same as $F(x^*)$ is within
a  faactor of $4$ of the optimal  value.
Thus we have the following result
}

%
%START HERE
%

%{\bf More Discussion of the result? }\\
\subsection{A Randomized Method}

Our main result in this section is:

\begin{theorem}
\label{thmmain} 
PIQP with at most logarithmic number of linear constraints ($p \le \log n$) and $\min_i{W_i} \geq \beta(n)$ can be approximated to within a factor
of $O(a_{max} \frac{n}{\beta(n)}\log^{2 + \gamma} n)$ w.h.p. , where $a_{max}$ is the maximum size of an entry in the constraint matrix $A$, and $\gamma >0$ is an arbitrary small constant.
\end{theorem}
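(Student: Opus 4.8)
The plan is to relax PIQP to a convex program with hyperbolic (second--order cone) constraints, solve it, randomly round its optimal fractional point using biases proportional to the \emph{square roots} of the fractional values, and then certify feasibility with a Chernoff--Hoeffding bound and objective quality with the Kim--Vu polynomial concentration inequality. The starting observation is that on $\{0,1\}$ we have $x_ix_j=\sqrt{x_ix_j}$, and the hypograph $\{\,t\le\sqrt{x_ix_j}\,\}$ is exactly the hyperbolic constraint $t^2\le x_ix_j$, $x_i,x_j\ge0$. Hence
\[
F^*\ :=\ \max\Big\{\textstyle\sum_{i,j}b_{ij}\sqrt{x_ix_j}\ :\ a_k^TX\le W_k\ (k=1,\dots,p),\ 0\le x_i\le1\Big\}
\]
is an SOCP whose optimum $x^*$ is computable in polynomial time and satisfies $F^*\ge OPT$, since every $0$--$1$ optimum of PIQP is feasible for it.

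Next I would isolate the one deterministic inequality that drives the bound: for each constraint $k$, Cauchy--Schwarz gives
\[
\sum_j a_{kj}\sqrt{x_j^*}\ =\ \sum_j\sqrt{a_{kj}}\cdot\sqrt{a_{kj}x_j^*}\ \le\ \Big(\sum_j a_{kj}\Big)^{1/2}\Big(\sum_j a_{kj}x_j^*\Big)^{1/2}\ \le\ \sqrt{n\,a_{max}\,W_k},
\]
which is the source of the factor $n$ in the final ratio. Now round: set $X_i=1$ independently with probability $p_i=\lambda\sqrt{x_i^*}$, where $\lambda=\Theta\big(\sqrt{\beta(n)/(n\,a_{max})}/\log^{1+\gamma/2}n\big)$ (note $\lambda\le1$; if $\beta(n)>n\,a_{max}$ then $a_k^TX\le n\,a_{max}<W_k$ for \emph{every} $0$--$1$ vector, so we simply return the all--ones solution). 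Then
\[
{\bf E}\Big[\textstyle\sum_{i,j}b_{ij}X_iX_j\Big]\ =\ \lambda^2\textstyle\sum_{i,j}b_{ij}\sqrt{x_i^*x_j^*}\ =\ \lambda^2F^*\ \ge\ \lambda^2\,OPT,
\]
while the Cauchy--Schwarz inequality gives ${\bf E}[a_k^TX]=\lambda\sum_j a_{kj}\sqrt{x_j^*}\le\lambda\sqrt{n a_{max}W_k}\le W_k/\log^{1+\gamma/2}n$ by the choice of $\lambda$ and $W_k\ge\beta(n)$.

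The third step is concentration on both sides. For feasibility, $a_k^TX=\sum_j a_{kj}X_j$ is a sum of independent weighted $0$--$1$ variables with mean $\mu_k\le W_k/\log^{1+\gamma/2}n$; a multiplicative Chernoff bound (used here in the large--deviation regime, with the $\mathrm{polylog}$ multiplicative slack between $\mu_k$ and $W_k$) gives ${\bf P}[a_k^TX>W_k]\le n^{-\Omega(1)}$ provided $\beta(n)$ exceeds a suitable $\Omega(a_{max}\log n)$; a union bound over the $p\le\log n$ constraints is why the logarithmic bound on $p$ is imposed. (For $\beta(n)$ below this threshold one uses instead the greedy algorithm of Section~3; this is precisely the trade--off exploited by the combined algorithm of Section~5.) For the objective, $Y=\sum_{i,j}b_{ij}X_iX_j$ is a degree--$2$ multilinear polynomial in the $X_i$, so Kim--Vu applies with $\varepsilon_0={\bf E}[Y]=\lambda^2F^*$, $\varepsilon_1=\lambda\max_v\sum_u b_{uv}\sqrt{x_u^*}$, $\varepsilon_2=\max_{ij}b_{ij}$; taking $t^2={\bf E}[Y]/2$ it yields ${\bf P}[Y<{\bf E}[Y]/2]\le n^{-\Omega(1)}$ whenever ${\bf E}[Y]\ge\max(\varepsilon_1,\varepsilon_2)\cdot\mathrm{polylog}(n)$. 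Combining, with probability $1-o(1)$ the rounded vector is feasible and has value at least $\tfrac12\lambda^2\,OPT=\Omega\big(\tfrac{\beta(n)}{a_{max}\,n\,\log^{2+\gamma}n}\big)\,OPT$, i.e.\ an $O\big(a_{max}\tfrac{n}{\beta(n)}\log^{2+\gamma}n\big)$--approximation; repeating the rounding $\mathrm{poly}(n)$ times makes the success probability ``with high probability''.

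The feasibility half is routine; the crux is the Kim--Vu hypothesis ${\bf E}[Y]\ge\max(\varepsilon_1,\varepsilon_2)\,\mathrm{polylog}(n)$, i.e.\ that the expected rounded objective dominates the largest ``fractional weighted degree'' $\varepsilon_1$ and the largest single benefit $\varepsilon_2$. This can fail when $F^*$ is carried by a few vertices of tiny fractional value but large potential degree. I would handle it by a preliminary case split on the size of $OPT$: since (by the reduction in Section~2) a single maximum--benefit pair can always be realized feasibly, $OPT\ge\max_{ij}b_{ij}=\varepsilon_2$, so if $OPT\le\big(a_{max}n/\beta(n)\big)\log^{2+\gamma}n\cdot\max_{ij}b_{ij}$ that single pair already meets the claimed ratio; otherwise ${\bf E}[Y]=\lambda^2F^*\ge\lambda^2\,OPT$ is large enough relative to $\varepsilon_2$, and --- after bucketing the edges into $O(\log n)$ benefit classes so that within a class all $b_{uv}$ are comparable and paying a $\log n$ factor --- large enough relative to $\varepsilon_1$ as well, so Kim--Vu delivers the concentration. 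Getting these thresholds to mesh, and absorbing the attendant losses into the $\log^{2+\gamma}n$ factor, is where the actual work lies; the rest is bookkeeping.
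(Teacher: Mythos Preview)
Your overall architecture matches the paper's --- SOCP relaxation, square-root randomized rounding with a scale $\lambda$, Chernoff--Hoeffding for feasibility, and Kim--Vu for the objective --- and your Cauchy--Schwarz derivation of $\sum_j a_{kj}\sqrt{x_j^*}\le\sqrt{n\,a_{max}\,W_k}$ is actually cleaner than the paper's Lemma~4.1, which obtains the equivalent bound by splitting the index set into $\{j:x_j^*>\beta(n)/(a_{kj}n)\}$ and its complement. Your decision to absorb a $\log^{1+\gamma/2}n$ factor into $\lambda$ (rather than the paper's device of repeating the rounding $n^{1/2+\epsilon}$ times) is a legitimate alternative and costs nothing in the final ratio.

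The genuine gap is your treatment of the failure mode $\varepsilon_1\gg\varepsilon_0$, i.e.\ when $\varepsilon_1=\lambda\max_v\sum_{u\in N(v)}b_{uv}\sqrt{x_u^*}$ is large compared to ${\bf E}[Y]$. You propose to ``bucket the edges into $O(\log n)$ benefit classes'', but this cannot help: $\varepsilon_1$ can dominate $\varepsilon_0$ even when all $b_{uv}$ are equal (a single bucket), because $\varepsilon_1$ is a \emph{vertex}-degree quantity --- take a star whose center $v^*$ has $x_{v^*}^*$ tiny while all leaves have $x_u^*$ near~$1$; then $\varepsilon_0\approx p_{v^*}\,\varepsilon_1$ with $p_{v^*}=\lambda\sqrt{x_{v^*}^*}$ arbitrarily small. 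Bucketing by benefit does nothing to control $p_{v^*}$. The paper does not try to rescue Kim--Vu here; instead it introduces a third, deterministic candidate solution: at the vertex $v$ realizing $\varepsilon_1$, solve the \emph{local} multi-constraint knapsack $\max\sum_{u\in N(v)}b_{uv}z_u$ subject to $\sum_u a_{iu}z_u\le W-a_{iv}$, $z_u\in\{0,1\}$. The key observation is that $z_u=\sqrt{x_u^*}/\lambda$ is a feasible fractional point for (a factor-$2$ relaxation of) this LP with value exactly $\varepsilon_1$, so the integral knapsack optimum is at least $\varepsilon_1/O(p)$, hence at least $\varepsilon_0/O(p\log^{2+\gamma}n)\ge OPT/O(\lambda^2\log^{2+\gamma}n)$. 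You need this extra algorithmic ingredient (or something equivalent that actually extracts a feasible integral solution of value $\Omega(\varepsilon_1)$); the bucketing sketch does not supply it.
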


\ignore{
The objective function  of {\bf scaled PIQP - needs to be replaced by name of that program} can be written as $\sum_i b(x_i)x_i\ \ + \sum_{i} b(x_i,x_i)x_i^2 + \sum\limits_{i,j} b(x_i,x_j)x_ix_j$ and the linear constraints as $\sum_{j}a_{ij}x_j \leq W_i, \; i=1, \ldots p $. We can interpret optimizing the  linear term as well as the quadratic term $\sum_{i} b(x_i,x_i)x_i^2$
as the classical 0-1 Knapsack problem with multiple constraints for which there is
a PTAS. Solving these problems separately will introduce a constant factor in the bound. Hence
we will restrict our attention to the quadratic objective function
$\sum\limits_{i,j} b(x_i,x_j)x_ix_j$.

For simplicity we will
transform the above inequalities by appropriate scaling
such that  $ \hat{a}_i^TX \leq W, \ \ i=1, \ldots p$,
where $W = \max_i \{W_i \}$ and  $\hat{a}_ij = \lceil a_{ij}W/W_i \rceil$.
Note that $\hat{X}_i \cdot \frac{a_{ij}W/W_i}{\lceil a_{ij}W/W_i \rceil}$ is
a solution to the original problem iff $\hat{X}$ is a solution to the scaled problem.
Let $\hat{X}^*_{ij}$ and $X^*_{ij}$
be the optimum solution to the scaled and original problems, respectively.
Since $\frac{\lceil a_{ij}W/W_i \rceil}{a_{ij}W/W_i} \leq 2$,
we get that $X^*_{ij}/2$ is a solution to the scaled problem.
We thus obtain that  the optimum solution to the scaled problem
is within a factor of $4$ of the optimum solution to the
original problem.

%{\bf The argument for this is .... needs to be filled in}.\\

The approximation factor for the scaled $PIQP$ is of the form $O(\hat{a}_{max} n/ \beta(n))$, where $\hat{a}_{max} = \max\{\lceil a_{ij}W/W_i \rceil\} = a_{max} \max_{ij} \frac{\lceil a_{ij}W/W_i \rceil}{a_{ij}W/W_i}$ which is at most $2  a_{max}$.

%{\bf Is this ok???--we need to remove the abova para--why was it there?}\\

Since using the solution from the scaled problem only loses us a factor of 8, we will focus on the scaled version of $PIQP$ with all the right sides of the constraints equal to $W$
\begin{eqnarray*}
\max  \  X^T Q X &+& c^TX\\
{\rm subject \ to}\\
  a_i^TX &\leq& W, \ \ i=1, \ldots p \\
 X_i &\in& \{ 0, 1 \}
\end{eqnarray*}
}

The randomized solution that we will use for $PIQP$ 
%{\bf-HAVE retermed the scaled problem as PIQP also (SK)---- scaled PIQP - needs to be replaced by name of that program} 
will be the solution
with maximum benefit among three different solutions, two deterministic
and one randomized.
We will use the Greedy Algorithm from Section~3, to get a solution within an approximation factor of $O(\min\{n,W\})$,
as indicated by Theorem \ref{greedythm}, to $PIQP$. Note that the approximation provided by the greedy
implies that we can assume $a_{max} \leq \beta(n)$. If $a_{max} > \beta(n)$ then $n < a_{max} n/ \beta(n)$ and hence
the Greedy Algorithm gives a better approximation factor than claimed by Theorem~4.1.

Now, let us describe the three potential solutions.
For any solution $x \in \{0,1\}^n$, we define the functions values $F(x)$ and $G_i(x)$ as the
objective function value and the constraint usage values, respectively, of $x$. That is, $F(x) = \sum\limits_{i,j} b_{ij}x_i x_j$
and $G_i(x) = \sum\limits_{j} a_{ij} x_j$. As in Section~3, we will interpret 
%{\bf scaled PIQP - needs to be replaced by name of that program} 
as an instance of GKPM, the multiconstraint version of GKP, with each variable $x_i$ corresponding to a vertex $v_i$ and $b_{ij}$ corresponding to the benefit on the edge $v_iv_j$. The coefficient $a_{ij}$ is the weight (cost) associated with $v_j$ in the $i$th constraint. To simplify the notation, sometimes we will use vertex notation $u$ in the subscripts, as in $b_{uv}$ (the benefit on the edge $uv$) and $a_{i,u}$ (weight in the constraint $i$ associated with vertex $u$), since each such subscript $i$ corresponds to both a variable and a vertex.

\paragraph{Solution 1:} 
%For each $i$, we can assume that $a_{ij} + a_{ik} \leq W$ for every pair of indices $(j,k)$. Otherwise, the benefit $b_{ij}$ can be set to $0$ as it is not possible to set both variables $x_i$ and $x_j$ to $1$. Thus we can assume that $x_k$ and $ x_l$ are such that $(x_k,x_l) = \arg \max_{(x_i,x_j)} b_{ij}$ can be set to $1$ . {\bf we need to move this assumption about $a_{ij} + a_{ik} \leq W$ earlier? Here we should be just giving the definition of the solution.}

Note that by the form of {\bf PIQP} 
$ \forall i$ we can assume that $a_{ij} + a_{ik} \leq W$ for every pair of indices $(j,k)$.
We can pick a solution $x_i =x_j =1$ corresponding to the edge $v_iv_j$ with 
maximum value of $b_{ij}$ over all possible edges $v_iv_j$. 
This will be out first solution.

\paragraph{Solution 2:}
For the remaining two solutions, we need to use a ``hyperbolic" program ($P^*$):

$$\begin{array}{rlll}
&maximize & \sum\limits_{uv \in E(G)} b_{uv}x_{uv}\\
&\text{such that}&
AX \leq We&\\
&&x_ux_v\geq x_{uv}^2\\
&&x_u, x_{uv} \in [0,1]\\
\end{array}$$

\paragraph{Solution 2(a)}
This optimization problem can be solved in polynomial time \cite{LVBL98} to get an optimal solution $x_u^* \in [0,1]$.
Let $Y$ be a random 0-1 solution generated by $Y_u = 1$ with probability $\sqrt{x_u^*}/\lambda$, where $\lambda = 2 \sqrt{a_{max} n / \beta(n)}$ is a scaling factor.

\begin{lemma}
\label{mainlemma1}
Let $W \geq \beta(n)$.
Then for each $i$,
$\sum_u a_{i,u} \sqrt{x_u^*} \leq 2 W\sqrt{a_{max} n/\beta(n)}$, where $a_{max} = \max\{a_{ij}\}$.
\end{lemma}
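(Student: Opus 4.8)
The crux is simply that the optimal solution $x^*$ of the relaxed hyperbolic program $P^*$ is feasible for the linear system $AX\le We$, so for each constraint index $i$ we already have a good handle on $\sum_u a_{i,u}x_u^*$, namely it is at most $W$. What the lemma asks to bound involves $\sqrt{x_u^*}$ rather than $x_u^*$, so the whole task is to convert a bound on a weighted sum of the $x_u^*$ into a bound on the weighted sum of their square roots. That conversion is exactly the shape of a Cauchy--Schwarz estimate, and this is the route I would take.

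Concretely, first I would write each summand as $a_{i,u}\sqrt{x_u^*}=\sqrt{a_{i,u}}\cdot\sqrt{a_{i,u}x_u^*}$, which is legitimate since all $a_{i,u}\ge 0$ and $x_u^*\in[0,1]$. Applying the Cauchy--Schwarz inequality to the two factor sequences gives
\[
\sum_u a_{i,u}\sqrt{x_u^*}\;\le\;\Bigl(\sum_u a_{i,u}\Bigr)^{1/2}\Bigl(\sum_u a_{i,u}x_u^*\Bigr)^{1/2}.
\]
Now I would bound the two factors separately: $\sum_u a_{i,u}\le n\,a_{max}$ because there are $n$ variables and every entry of the constraint matrix is at most $a_{max}$; and $\sum_u a_{i,u}x_u^*\le W$ by feasibility of $x^*$ in $P^*$. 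Combining, $\sum_u a_{i,u}\sqrt{x_u^*}\le\sqrt{n\,a_{max}}\cdot\sqrt{W}=W\sqrt{n\,a_{max}/W}$.

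Finally, I would invoke the hypothesis $W\ge\beta(n)$, so that $\sqrt{n\,a_{max}/W}\le\sqrt{n\,a_{max}/\beta(n)}$, yielding $\sum_u a_{i,u}\sqrt{x_u^*}\le W\sqrt{a_{max}n/\beta(n)}\le 2W\sqrt{a_{max}n/\beta(n)}$, as claimed (the factor $2$ is more slack than the inequality itself needs; it matches the scaling factor $\lambda=2\sqrt{a_{max}n/\beta(n)}$ used in Solution~2(a), where this lemma will be applied to show $\mathbf{E}[G_i(Y)]\le W$). There is essentially no obstacle here; the only point requiring a little care is choosing the split of the summand before applying Cauchy--Schwarz so that both resulting sums are quantities we already control, namely the $\ell_1$-mass $\sum_u a_{i,u}$ of the $i$-th constraint row and the constraint value $\sum_u a_{i,u}x_u^*$ itself.
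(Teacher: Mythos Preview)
Your proof is correct, and it takes a genuinely different route from the paper. The paper fixes $i$ and splits the index set at the threshold $x_u=\beta(n)/(a_{iu}n)$: for the ``large'' indices it uses $\sqrt{x_u}\le x_u\sqrt{a_{iu}n/\beta(n)}$ and then the feasibility bound $\sum_u a_{iu}x_u\le W$, while for the ``small'' indices it uses $\sqrt{x_u}\le\sqrt{\beta(n)/(a_{iu}n)}$ and then $|J|\le n$; the two pieces each contribute $W\sqrt{a_{max}n/\beta(n)}$, which is where the factor $2$ comes from. Your Cauchy--Schwarz argument collapses this case analysis into one line and in fact lands at $W\sqrt{a_{max}n/\beta(n)}$ without the factor $2$, so it is both shorter and slightly sharper. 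Conceptually the two proofs use the same ingredients (feasibility of $x^*$ in $P^*$ and the crude row bound $\sum_u a_{iu}\le n\,a_{max}$), but your split of the summand before applying Cauchy--Schwarz is the cleaner packaging.
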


\begin{proof}
Fix an $i$. Let $I = \{ u | x_u > \beta(n)/ (a_{iu} n)\}$ and $J = \{ u | x_u \leq \beta(n)/ (a_{iu} n)\}$.
Then, using the fact that $\sqrt{x_u} < x_u \sqrt{a_{iu} n/ \beta(n)}$ when $u \in I$, we get

$$\begin{array}{rcl}
\sum_u a_{iu} \sqrt{x_u}& = &\sum_{u \in I} a_{iu} \sqrt{x_u} + \sum_{u \in J} a_{iu} \sqrt{x_u} \\
& \leq & \sum_{u \in I} a_{iu} x_u \sqrt{a_{iu}  n / \beta(n)}
+ \sum_{u \in J} a_{iu} \frac{1}{\sqrt{a_{iu} n / \beta(n)}}\\
& \leq & \sqrt{a_{max} n / \beta(n)} \sum_{u \in I} a_{iu} x_u
+ \sqrt{a_{max}} \sum_{u \in J}\frac{1}{\sqrt{n / \beta(n)}}\\
& \leq & \sqrt{a_{max}  n / \beta(n)} W +  \sqrt{a_{max}} \frac{n}{\sqrt{n / \beta(n)}}\\
& \leq & \sqrt{a_{max}} W( \sqrt{n / \beta(n)} + \frac{n}{\beta(n) \sqrt{n / \beta(n)}})\\
& \leq & 2 \sqrt{a_{max}} \sqrt{n / \beta(n)} W
\end{array}$$

\end{proof}

Taking $\lambda = 2 \sqrt{a_{max} n / \beta(n)}$, gives us that
${\bf E}[G_i(Y)] = \sum a_{iu} \sqrt{x_u^*}/\lambda \leq 2 \sqrt{a_{max} (n/\beta(n))} W/ \lambda \leq W$, using the lemma.
Note that ${\bf E}[F(Y)] = \sum b_{uv} \sqrt{x_u^*} \sqrt{x_v^*}/{\lambda}^2$ equals the optimal value of the hyperbolic program ($P^*$) divided by $\lambda^2$, since $x_u^*x_v^* = x_{uv}^*$ at optimality.

\paragraph{Solution 2(b):}
Let $Z$ be a 0-1 solution generated by the following procedure.
First find $v \in V(G)$ such that $v =\arg\!\max_w \sum_{u \in N_G(w)} b_{uw} \sqrt{x_u^*}/\lambda$. Then, for this fixed $v$, define $Z_v = 1$, $Z_w = 0$ at every vertex, $w \not\in N_G(v) \cup \{v\}$ ($v$ and its neighbors in $G$), and $Z_u$ for $u \in N_G(v)$  is determined by solving a ``local" 0-1 Knapsack problem with multiple constraints, whose items are the neighbors of $v$ and benefit of each such item equals the benefit of the edge incident to it and $v$,
and is solvable by a PTAS \cite{books/knapsackProblems}:

$$\begin{array}{rlll}
& \max & \sum\limits_{e: e= vu} b_{e}z_{u}\\
&\text{such that}&
\sum\limits_{e: e= vu} a_{iu}z_u \leq W - a_{iv},\;\; i=1, \ldots, p&\\
&& z_u \in \{0,1\} \\
\end{array}$$

%{\bf  JUst referenced again-welll known (SK)--- Any comment about solving this multiconstraint Knapsack? Maybe we need ot have a subsection in Section 2 where we discuss Multiconstraint Knapsack and its solution, since we keep referring to this problem}\\

Thus, $G_i(Z) = \sum\limits_{e: e= vu} a_{iu}z_u +  w(v) \leq W$, that is $Z$ is a feasible solution (recall al other variables are 0). Also note that $F(Z) =  \sum\limits_{e: e= vu} b_{e}z_{u}$.

We claim that picking the best of the solutions obtained
will give a solution with the stated  approximation factor.

We have to show that the function $F$ evaluated at this 0-1 solution is not too far from the optimal solution for {\bf PIQP scaled}  with high probability, and similarly each of the functions $G_i$ evaluated at this 0-1 solution satisfy the budget bound with high probability.

\subsection{Analyzing the Objective Function}
\label{objanalysis}

Let us first study the function $F(x) = \sum\limits_{uv \in E(G)} b_{uv}x_u x_v$.

For $0<\alpha < 1$, with $P$ denoting the 0-1 hyperbolic program for solving QKP, we have that
$$\begin{array}{rl}
&{\bf P}[F(Y) < (1-\alpha) OPT(P)/\lambda^2]\\
\leq& {\bf P}[F(Y) < (1-\alpha) OPT(P^*)/\lambda^2]\\
=& {\bf P}[F(Y) < (1-\alpha) {\bf E}[F(Y)] ]\\
=& {\bf P}[{\bf E}[F(Y)] - F(Y) > \alpha {\bf E}[F(Y)]  ]\\
\end{array}$$

If we can show that this probability is small, then that would prove that $Y$ is within a factor $\lambda^2/(1-\alpha)$ of the optimal (as long as the budget constraints are satisfied). \\

Recall that $\varepsilon= \max \{\varepsilon_0, \varepsilon_1, \varepsilon_2 \}$, and $\varepsilon'= \max \{\varepsilon_1, \varepsilon_2 \}$.
Here $\varepsilon_0 = {\bf E}[F(Y)]= OPT(P^*)/\lambda^2$, $\varepsilon_1 = \max_v (\sum_{u\in N_G(v)}b_{uv} {\bf P}[Y_u =1]) = \max_v (\sum_{u\in N_G(v)} b_{uv} \sqrt{x_u^*}/\lambda)$, $\varepsilon_2 = \max_{uv\in E(G)} b_{uv}$.\\

We split the analysis into four cases depending on the relative worth of the expected value of the solution $Y$ with regard to $Z$ and the edge with maximum benefit (Solution 1). Let $\gamma > 0$ be a fixed small constant.

\paragraph{Case (i)} When $\varepsilon_2 > \varepsilon_1$ and $\varepsilon_0 < \varepsilon_2 \log^{2+\gamma} n$, Solution 1, $\max_{uv} b_{uv}$ provides a good solution.\\

By assumption, $\max_{uv\in E(G)} b_{uv}$ is greater than $\max_v (\sum_{u\in N_G(v)} \sqrt{x_u^*}/ \lambda$  as well as\\ $\sum b_{uv} \sqrt{x_u^*} \sqrt{x_v^*}/ \lambda^2 \log^{2+\gamma} n)$, i.e., $\max_{uv} b_{uv} > OPT(P)/\lambda^2$. So  $\max_{uv} b_{uv}$ is within a factor $\lambda^2 \log^{2+\gamma} n$ of the optimum.\\

\paragraph{Case (ii)} When $\varepsilon_2 > \varepsilon_1$ and $\varepsilon_0 > \varepsilon_2 \log^{2+\gamma} n$, the concentration bounds provided by
the results in  Kim-Vu \cite{kimvu:concentration, vu:con2} bound the error in the randomized solution $Y$.\\

From the Kim-Vu bound,  ${\bf P}[{\bf E}[F(Y)] - F(Y) > t^2] < 2e^2 \;\; exp\left(-t/32(2\varepsilon\varepsilon')^{1/4} + \log n\right)$. Taking $t^2 = \alpha {\bf E}[F(Y)]$, we get

$$\begin{array}{rl}
&{\bf P}[{\bf E}[F(Y)] - F(Y) > \alpha {\bf E}[F(Y)]  ]\\
\leq & 2e^2 \;\; exp\left(\frac{-\sqrt{\alpha}}{32 2^{1/4}} \sqrt{\varepsilon_0}/(\varepsilon\varepsilon')^{1/4} + \log n\right)\\
\end{array}$$

Under the given assumptions, $\varepsilon = \varepsilon_0/ \log^{2+\gamma} n$ and $\varepsilon' = \varepsilon_2$, the above bound reduces to

$$\begin{array}{rl}
&2e^2 \;\; exp\left(\frac{-\sqrt{\alpha}}{32 \cdot 2^{1/4}} (\frac{\varepsilon_0}{\varepsilon_2} \log^{2+\gamma} n)^{1/4} + \log n\right)\\
\leq &2e^2 \;\; exp\left(\frac{-\sqrt{\alpha}}{32 \cdot 2^{1/4}}(\log^{2+\gamma} n \log^{2+\gamma} n)^{1/4} + \log n\right)\\
= & 2e^2 \;\; exp\left(\frac{-\sqrt{\alpha}}{32 \cdot 2^{1/4}} \; \log^{1 + \frac{\gamma}{2}} n + \log n\right)\\
\end{array}$$

which is o(1). \\

\paragraph{Case (iii)} When $\varepsilon_1 > \varepsilon_2$ and $\varepsilon_0 > \varepsilon_1 \log^{2+\gamma} n$,  the bounds provided by Kim-Vu \cite{kimvu:concentration, vu:con2}
can again be applied  to bound the error in  $Y$, the randomized solution.\\

The application of  the concentration bounds from Kim-Vu to $F(Y)$ is
identical to the previous case with the roles of $\varepsilon_1$ and $\varepsilon_2$ interchanged.

\paragraph{Case (iv)} When $\varepsilon_1 > \varepsilon_2$ and $\varepsilon_0 < \varepsilon_1 \log^{2+\gamma} n$,  we show that the solution $Z$, obtained from the
local multi-constrained knapsack problem works as a good solution.\\

%{\bf DONE--this needs the detailed explanation and proof - factor p}\\

By the definition of $Z$, $F(Z)$, the objective function value obtained from
the integer solution to the knapsack problem by rounding
the fractional solution is within a factor of $p+1$ of
the optimal fractional solution 
provided by $ \max \sum\limits_{e: e= vu} b_{e}z'_{u} $
subject to 
$\sum_{u: (v,u) \in E} a_{iu}z'_u \leq W-w(v), \  \forall i , z''_{u} \in [0,1],$.
This bound is proved in the appendix.

Since $w(v) \leq W/2$, this fractional solution is within a factor of $2$ of the solution to
the relaxation, $RKP$:
\[  \max \sum\limits_{e: e= vu} b_{e}z''_{u} \; \;\]
\[{\rm s. \ t.}
\sum_{u: (v,u) \in E} a_{iu}z''_u \leq W,  \ \forall i\]
\[ z''_{u} \in [0,1] \]

Since $\varepsilon_1$ is the objective function value evaluated at the 
neighborhood of $v$, $\sum_{u\in N_G(v)} b_{uv} \sqrt{x_u^*}/\lambda$, 
at a particular solution $x_u^*$ of $P^*$, and
by Lemma~3.1, such a solution also satisfies the 
constraint $\sum_u a_{iu} \sqrt{x_u^*}/\lambda \leq W$, 
we conclude that $ \sum\limits_{e: e= vu} b_{e}z^{opt}_{u}  \geq \varepsilon_1$, 
where $z^{opt}$ is the optimal solution to the relaxation $RKP$.

Since $\varepsilon_1$ is at least $OPT(P)/ \lambda^2 \log^{2+ \gamma} n$,  
$Z$ is within a factor $O(\lambda^2 \log^{2+\gamma} n)$ of the optimum. 
Note that $Z$ satisfies the budget constraint.\\

\subsection{Analyzing the Budget Function}
\label{budgetanalysis}

We consider the function $G_i(Y)$.
We know by the choice of $\lambda$ that \[E[G_i(Y)]  \leq W \]

\ignore{
Further note that since our program is convex, the  weight constraint is tight. Thus
\[\sum_u a_{iu}x^*(u) \geq W \]
and
\[ {\bf E}[G_i(Y)] = \sum_u a_{iu} \sqrt{x_u^*}/\lambda \geq \sum_u a_{iu}x^*(u)/\lambda \geq W/\lambda \]

By the Chernoff-Hoeffding bound we obtain:
$$\begin{array}{rl}
\mPr [G_i(Y) > (1 +\delta) W ]& \\
\leq& {\bf P} [ G(Y) > (1+\delta)E(G(Y)) ] \\
\leq&  e^{-({\bf E}[G(Y)] \delta^2)/(3 a_{max})}\\
\leq& e^{-(\beta(n) \delta^2)/ (\lambda 3a_{max}) }\\
\leq& e^{-((\delta^2/6 \sqrt{n}) (\beta(n)/ a_{max})^{3/2})} \\
\leq& e^{-((\delta^2/6 \sqrt{n}))}\\
\end{array}$$
using the assumption that $ \beta(n) \geq  a_{max}$.

}

\begin{claim} $\mPr [G_i(Y) > (1 +\delta) W ] = o(1) \forall i$  
\end{claim}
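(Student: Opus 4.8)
The plan is to apply a Chernoff--Hoeffding tail bound to the weighted sum $G_i(Y)=\sum_u a_{iu}Y_u$, exploiting that the $Y_u$ are independent $0$--$1$ variables, that each coefficient satisfies $0\le a_{iu}\le a_{max}$, and --- crucially --- that the choice $\lambda = 2\sqrt{a_{max} n/\beta(n)}$ together with Lemma~\ref{mainlemma1} already gives $\mu_i := \mathbf{E}[G_i(Y)] = \sum_u a_{iu}\sqrt{x_u^*}/\lambda \le W$. Fix $i$. Since $\mu_i\le W$, the event $\{G_i(Y)>(1+\delta)W\}$ is contained in $\{G_i(Y)>(1+\delta)\mu_i\}$, so it suffices to control how far $G_i(Y)$ can exceed its own mean; and because $W\ge\beta(n)$ while $\mu_i$ may be much smaller, this is a comfortably \emph{large} deviation, which is what makes the bound go through.

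First I would pass to the normalized variables $a_{iu}/a_{max}\in[0,1]$ and apply the Chernoff--Hoeffding corollary of Section~2 to $G_i(Y)/a_{max}=\sum_u (a_{iu}/a_{max})Y_u$, whose mean is $\mu_i/a_{max}$. In multiplicative form (writing $(1+\delta)W=(1+\delta')\mu_i$ with $1+\delta'=(1+\delta)W/\mu_i\ge 1+\delta$) this yields
\[
\mathbf{P}\big[G_i(Y)>(1+\delta)W\big]\ \le\ \Big(\tfrac{e\,\mu_i}{(1+\delta)W}\Big)^{(1+\delta)W/a_{max}}\ \le\ \Big(\tfrac{e}{1+\delta}\Big)^{(1+\delta)\beta(n)/a_{max}},
\]
where the last inequality uses $\mu_i\le W$ and $W\ge\beta(n)$. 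Choosing $\delta$ to be any fixed constant with $1+\delta>e$ makes the base a constant strictly below $1$, so the whole expression is exponentially small in $\beta(n)/a_{max}$. (If one instead wishes to keep $\delta\le 1$ and use the corollary exactly as stated, in the form $e^{-\mu_i\delta^2/(3a_{max})}$, the same conclusion follows after first noting $\mu_i\ge W/\lambda$: this holds because $\sqrt{x_u^*}\ge x_u^*$ for $x_u^*\in[0,1]$ and the optimum $x^*$ of the convex program $P^*$ makes each budget constraint essentially tight, so $\mu_i\ge\sum_u a_{iu}x_u^*/\lambda\ge W/\lambda$; then the exponent is at least $\delta^2 W/(3\lambda a_{max})=\tfrac{\delta^2}{6}\,(\beta(n)/a_{max})^{3/2}/\sqrt n$.)

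Finally I would observe that in the regime where the randomized algorithm is actually invoked we have $\beta(n)/a_{max}\to\infty$, and in fact $\beta(n)/a_{max}=\omega(\log n)$: if $a_{max}$ is within a $\log^{2+\gamma}n$ factor of $\beta(n)$ then $a_{max}\tfrac{n}{\beta(n)}\log^{2+\gamma}n=\Omega(n)$ and the greedy algorithm of Theorem~\ref{greedythm} already meets the claimed bound, so we may assume $\beta(n)/a_{max}$ exceeds $\log^{2+\gamma}n$. Under this, each $\mathbf{P}[G_i(Y)>(1+\delta)W]=n^{-\omega(1)}$, and since $p\le\log n$ a union bound over the $p$ constraints gives $\mathbf{P}[\exists i:\ G_i(Y)>(1+\delta)W]=o(1)$, which proves the claim. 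The main obstacle is exactly this calibration step: making sure the exponent $\beta(n)/a_{max}$ (resp.\ $(\beta(n)/a_{max})^{3/2}/\sqrt n$ in the small-$\delta$ variant) genuinely diverges under the paper's choice of $\beta(n)$, and --- in the small-$\delta$ variant --- justifying the lower bound $\mu_i\ge W/\lambda$ from tightness of the budget constraints at the optimum of $P^*$.
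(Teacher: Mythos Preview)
Your parenthetical small-$\delta$ route has a real gap: the claim that each budget constraint is ``essentially tight'' at the optimum of $P^*$, giving $\mu_i\ge W/\lambda$ for every $i$, is not justified. With $p$ budget inequalities together with the box constraints $x_u\le 1$, there is no reason all $p$ of them are active at the optimum, so some $\mu_i$ may be much smaller than $W/\lambda$. The paper's proof is organized around exactly this failure: it splits into Case~1 ($\mu_i\ge W/\lambda$), where your lower bound holds and the multiplicative Chernoff form gives $e^{-(\delta^2/6)(\beta(n)/a_{max})^{3/2}/\sqrt n}$, and Case~2 ($\mu_i< W/\lambda$), where instead the additive Hoeffding bound is applied with deviation $\delta W/\lambda$ to obtain $e^{-\delta^2/(2n)}$. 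Neither per-trial bound is $o(1)$ by itself; the paper then amplifies by repeating the randomized rounding $k=n^{1/2+\epsilon}$ (more generally $n^c$) times.

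Your primary route --- the large-deviation Chernoff form $(e\mu_i/t)^{t}$ with $t=(1+\delta)W/a_{max}$ and $\delta>e-1$ --- is correct and genuinely different from the paper. It avoids the case split entirely (it needs only $\mu_i\le W$, never a lower bound on $\mu_i$), and with your calibration $\beta(n)/a_{max}>\log^{2+\gamma}n$ it yields a single-trial probability of $n^{-\omega(1)}$, so no repetition is required and the union bound over $p\le\log n$ constraints is immediate. The costs are that you invoke a Chernoff form not stated in the paper (the paper's corollary is only for $0<\delta\le 1$) and that you take a larger constant $\delta$, hence a larger budget overshoot to absorb afterwards; the paper's approach keeps $\delta\le 1$ but pays with the two-case analysis and polynomial repetition.
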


\paragraph{Case 1:} ${\bf E}[G_i(Y)] \geq W/\lambda$

By Chernoff-Hoeffding Bound-1 we obtain:
$$\begin{array}{rl}
\mPr [G_i(Y) > (1 +\delta) W ]& \\
\leq& \mPr [ G_i(Y) > (1+\delta)E(G_i(Y)) ] \;\;\; {\text since} \;\;{\bf E}(G_i(Y)) \leq W\\
\leq&  e^{-({\bf E}[G_i(Y)] \delta^2)/(3 a_{max})}\\
\leq& e^{-(\beta(n) \delta^2)/ (\lambda 3a_{max}) }\;\;\; \text{since} \;\;{\bf E}[G_i(Y)] \geq W/\lambda \geq \beta)n)/\lambda\\
\leq& e^{-((\delta^2/6 \sqrt{n}) (\beta(n)/ a_{max})^{3/2})} \;\;\; 
\text{using the value of } \lambda \\
\leq& e^{-((\delta^2/6 \sqrt{n}))}\\
\end{array}$$
using the assumption that $ \beta(n) \geq  a_{max}$.

\paragraph{Case 2:} ${\bf E}[G_i(Y)] \leq W/\lambda$

By Chernoff-Hoeffding Bound-2 %({\bf DONE Which version? We need put both versions in Section 2-C}) 
we obtain:
$$\begin{array}{rl}
\mPr [G_i(Y) > (1 +\delta) W ]& 
\leq \mPr [G_i(Y) > \frac{(1 +\delta)}{\lambda} W] \;\;\; \text{since} \;\; \lambda >0\\
\leq& \mPr [ G_i(Y) - \frac{W}{\lambda}> \delta \frac{W}{\lambda}] \\
\leq& \mPr [ G_i(Y) - E(G_i(Y)) > \delta \frac{W}{\lambda}] \;\;\; \text{since} \;\;{\bf E}(G_i(Y)) \leq W/\lambda\\
\leq&  e^{-(2 \delta^2 W^2/\lambda^2)/(n a_{max}^2)}\\
=& e^{-\frac{2\delta^2}{n}(\frac{W^2}{\lambda^2 a_{max}^2}) }\\
\leq& e^{-\frac{2\delta^2}{n}(\frac{W^2 \beta(n)}{4 a_{max}^3}) }\;\;\; 
\text{using the value of } \lambda \\
\leq& e^{-\frac{\delta^2}{2n}(\frac{\beta^2(n) \beta(n)}{a_{max}^3}) }\;\;\; 
\text{since } W \geq \beta(n) \\
\leq& e^{-\frac{\delta^2}{2n}} \;\;\; \text{since } \beta(n) \geq a_{max}\\
\end{array}$$

We note that to obtain a solution that satisfies the p+1 conditions:
\begin{enumerate}

\item
$ {\bf P}[{\bf E}[F(Y)] - F(Y) > \alpha {\bf E}[F(Y)]=o(1)$

and

\item
$\mPr [G_i(Y) > (1 +\delta) W ] = o(1) \forall i$
\end{enumerate}

we compute the probability that any one of these conditions is not satisfied.
We will simply concentrate on the second set of probabilities, since the first inequality
has already been shown to be $o(1)$.

%{\bf Check!}\\

If we repeat the randomized algorithm $k$ times this probability
becomes $e^{-(\delta^2 k/6 \sqrt{n})}$ and adding up all these
probabilities for the $p$ constraints simply
makes it $pe^{-(\delta^2 k/6 \sqrt{n})}$
which is $o(1)$ when $k = n^{\frac{1}{2} + \epsilon}$, when $p$
is a fixed number.
In fact, when $p= O(\log n)$,  $k=O(n^c)$, iterations,
for a constant $c$, suffices.

This gives us the approximation bound w.h.p.

\subsection{Approximating PIQP}

%{\bf DONE : This should not be a separate section. Maybe move this discussion along with statements of main theorems of Sections 3 and 4 to Preliminary Section or a new subsection of previous section?}\\

We use the result of the previous section  to provide an
approximation for the PIQP problem.

The greedy algorithm (Theorem~\ref{greedythm}
) provides an $O(\min (n,W))$  approximation factor method.
Note that if $W < n^{1/2}$, the greedy algorithm provides a
$O(\sqrt{n})$ factor algorithm.

Theorem~\ref{thmmain} proved in the previous section gives a
$O(a_{max} \frac{n}{\beta(n)})\log^{2 + \gamma} n)$ factor algorithm. Choosing $\beta(n)=W \geq n^{1/2}$, Theorem~\ref{thmmain} and Theorem~\ref{greedythm} give us the following result by picking the best of the two algorithms:

\begin{theorem}
PIQP can be approximated to within a factor of
$O(a_{max}\sqrt{n}log^{2+\gamma}n)$ w.h.p. in polynomial time, where $\gamma$
is an arbitrary small constant.
\end{theorem}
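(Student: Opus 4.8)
The plan is to combine Theorem~\ref{greedythm} and Theorem~\ref{thmmain} by a case analysis on the size of $W=\max_i W_i$ relative to $n^{1/2}$, running both algorithms and returning the feasible solution of larger objective value. Since an $\alpha$-approximation for the scaled PIQP yields an $O(\alpha)$-approximation for the original PIQP (we lose only the constant factor $4$ incurred by the scaling argument of Section~2), it suffices to approximate the scaled problem, in which every constraint reads $a_i^T X \le W$.

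First I would dispose of the regime $W < n^{1/2}$. Here Theorem~\ref{greedythm}, applied with any fixed $t \ge 1$, produces in polynomial time a solution within a factor $O(p\min(n,W)/t) = O(pW) = O(p\,n^{1/2})$ of optimal. Since we have assumed $p \le \log n$ and $a_{max} \ge 1$ (the coefficients are positive integers), this is $O(a_{max}\,n^{1/2}\log n) = O(a_{max}\,n^{1/2}\log^{2+\gamma} n)$, as required, and this bound is deterministic.

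Next, the regime $W \ge n^{1/2}$. I would invoke Theorem~\ref{thmmain} with the choice $\beta(n) = W$. Its hypotheses hold: $p \le \log n$ by assumption, and after scaling every right-hand side equals $W$, so $\min_i W_i = W = \beta(n)$. The theorem then gives, w.h.p.\ and in polynomial time, a solution within a factor $O\!\big(a_{max}\,\frac{n}{\beta(n)}\log^{2+\gamma} n\big) = O\!\big(a_{max}\,\frac{n}{W}\log^{2+\gamma} n\big) \le O\!\big(a_{max}\,n^{1/2}\log^{2+\gamma} n\big)$, using $W \ge n^{1/2}$. One should also recall the remark preceding the description of the three candidate solutions: if $a_{max} > \beta(n)$ then $n < a_{max}\,n/\beta(n)$, so the greedy $O(\min(n,W))$-approximation already beats the claimed bound; hence there is no loss in assuming $a_{max} \le \beta(n)$ inside the analysis of Theorem~\ref{thmmain}.

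Finally, running the greedy algorithm and the randomized algorithm (the latter repeated $O(n^c)$ times to push the success probability past $1-o(1)$, as described at the end of Section~\ref{budgetanalysis}) and outputting whichever feasible solution has the larger value of $F$ gives, w.h.p., an $O(a_{max}\,n^{1/2}\log^{2+\gamma} n)$-approximation in polynomial time in both regimes. I do not expect a genuine obstacle here: the only point requiring care is the bookkeeping of constants — those lost in the scaling reduction and in the ``best of three solutions'' step of Theorem~\ref{thmmain} — together with verifying that the assumption $p \le \log n$ is exactly what lets the $O(p\,n^{1/2})$ greedy bound be absorbed into $O(a_{max}\,n^{1/2}\log^{2+\gamma} n)$.
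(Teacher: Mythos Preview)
Your proposal is correct and follows essentially the same approach as the paper: split on whether $W<n^{1/2}$ or $W\ge n^{1/2}$, invoke the greedy bound (Theorem~\ref{greedythm}) in the first regime and the randomized bound (Theorem~\ref{thmmain}) with $\beta(n)=W$ in the second, and output the better of the two. You are in fact a bit more careful than the paper in explicitly absorbing the factor $p\le\log n$ from the greedy bound into the $\log^{2+\gamma}n$ term via $a_{max}\ge 1$, a step the paper leaves implicit.
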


\bibliographystyle{plain}
\bibliography{proposal}

\begin{thebibliography}{10}

\bibitem{BCCFV2010}
Aditya Bhaskara, Moses Charikar, Eden Chlamtac, Uriel Feige, and Aravindan
  Vijayaraghavan.
\newblock Detecting high log-densities: an $o(n^{1/4})$ approximation for
  densest k-subgraph.
\newblock In {\em Proceedings of the 42nd ACM symposium on Theory of
  computing}, STOC '10, pages 201--210, New York, NY, USA, 2010. ACM.

\bibitem{BHW2009}
Glencora Borradaile, Brent Heeringa, and Gordon Wilfong.
\newblock The 1-neighbour knapsack problem.
\newblock In {\em Proceedings of the 22nd international conference on
  Combinatorial Algorithms}, IWOCA'11, pages 71--84, Berlin, Heidelberg, 2011.
  Springer-Verlag.

\bibitem{C52}
H.~Chernoff.
\newblock A measure of asymptotic efficiency for tests of a hypothesis based on
  the sum of observations.
\newblock {\em Annals of Mathematical Statistics}, 23(4), 1952.

\bibitem{FKP2001}
U.~Feige, G.~Kortsarz, and D.~Peleg.
\newblock The dense {$k$}-subgraph problem.
\newblock {\em Algorithmica}, 29(3):410--421, 2001.

\bibitem{GHS1980}
G.~Gallo, P.~L. Hammer, and B.~Simeone.
\newblock Quadratic knapsack problems.
\newblock In {\em Combinatorial Optimization}, volume~12 of {\em Mathematical
  Programming Studies}, pages 132--149. 1980.

\bibitem{GNY1994}
O.~Goldschmidt, D.~Nehme, and G.~Yu.
\newblock On the set-union knapsack problem.
\newblock {\em Naval Research Logistics}, 41(6):833--842, 1994.

\bibitem{GS2011}
Venkatesan Guruswami and Ali~Kemal Sinop.
\newblock Lasserre hierarchy, higher eigenvalues, and approximation schemes for
  graph partitioning and quadratic integer programming with psd objectives.
\newblock In {\em Proceedings of the 2011 IEEE 52nd Annual Symposium on
  Foundations of Computer Science}, FOCS '11, pages 482--491, Washington, DC,
  USA, 2011. IEEE Computer Society.

\bibitem{Hastad99}
Johan Hastad.
\newblock Clique is hard to approximate within $n^{1-\epsilon}$.
\newblock {\em Acta Mathematica}, 182:105--142, 1999.

\bibitem{H63}
W.H. Hoeffding.
\newblock Probability inequalities for sums of bounded random variables.
\newblock {\em Journal of the American Statistical Association}, 58(301), 1963.

\bibitem{RW2002}
David J.~Rader Jr. and Gerhard~J. Woeginger.
\newblock The quadratic 0-1 knapsack problem with series-parallel support.
\newblock {\em Oper. Res. Lett.}, 30(3):159--166, 2002.

\bibitem{KPP2004}
Hans Kellerer, Ulrich Pferschy, and David Pisinger.
\newblock {\em Knapsack problems}.
\newblock Springer-Verlag, Berlin, 2004.

\bibitem{books/knapsackProblems}
Hans Kellerer, Ulrich Pferschy, and David Pisinger.
\newblock {\em Knapsack problems.}
\newblock Springer, 2004.

\bibitem{Khot}
Subhash Khot.
\newblock Ruling out {PTAS} for graph min-bisection, dense k-subgraph, and
  bipartite clique.
\newblock {\em SIAM J. Comput}, 36:1025--1071, 2006.

\bibitem{kimvu:concentration}
Jeong-Han Kim and Van~H. Vu.
\newblock Concentration of multivariate polynomials and its applications.
\newblock {\em Combinatorica}, pages 417--434, 2000.

\bibitem{KimKojima2001}
S~Kim and M~Kojima.
\newblock {Second order cone programming relaxation of nonconvex quadratic
  optimization problems}.
\newblock {\em {OPTIMIZATION METHODS \& SOFTWARE}}, {15}({3-4}):{201--224},
  {2001}.

\bibitem{KS2007}
Stavros~G. Kolliopoulos and George Steiner.
\newblock Partially ordered knapsack and applications to scheduling.
\newblock {\em Discrete Appl. Math.}, 155(8):889--897, 2007.

\bibitem{LiKK11}
Z.~Li, S.~Kapoor, H.~Kaul, E.~Veliou, B.~Zhou, and C.~Lee.
\newblock A new methodology for transportation investment decisions considering
  project interdependencies.
\newblock {\em Journal of the Transportation Research Board}, pages 36--46,
  2012.

\bibitem{lbvl98}
M.~Lobo, L.~Vandenberghe, S.~Boyd, and H.~Lebret.
\newblock Applications of second-order cone programming.
\newblock {\em Linear Algebra and its Applications}, 284:193--228, 1998.

\bibitem{LVBL98}
Miguel~S. Lobo, Lieven Vandenberghe, Stephen Boyd, and Herv\'{e} Lebret.
\newblock {Applications of second-order cone programming}.
\newblock {\em Linear Algebra and its Applications}, 284(1-3):193--228,
  November 1998.

\bibitem{PS2009}
Ulrich Pferschy and Joachim Schauer.
\newblock The knapsack problem with conflict graphs.
\newblock {\em J. Graph Algorithms Appl.}, 13(2):233--249, 2009.

\bibitem{P2007}
D.~Pisinger.
\newblock The quadratic knapsack problem - a survey.
\newblock {\em Discrete Appl. Math.}, 155:623--648, 2007.

\bibitem{SY2000}
N.~Samphaiboon and T.~Yamada.
\newblock Heuristic and exact algorithms for the precedence-constrained
  knapsack problem.
\newblock {\em J. Optim. Theory Appl.}, 105(3):659--676, 2000.
\newblock Special Issue in honor of Professor David G. Luenberger.

\bibitem{SW1998}
Anand Srivastav and Katja Wolf.
\newblock Finding dense subgraphs with semidefinite programming.
\newblock In {\em Approximation algorithms for combinatorial optimization
  ({A}alborg, 1998)}, volume 1444 of {\em Lecture Notes in Comput. Sci.}, pages
  181--191. Springer, Berlin, 1998.

\bibitem{S99findingdense}
Anand Srivastav and Katja Wolf.
\newblock Finding dense subgraphs with mathematical programming, 1999.

\bibitem{vu:con2}
V.~H. Vu.
\newblock Concentration of non-lipschitz functions and applications.
\newblock {\em Random Structures and Algorithms}, 20(3):262--316, 2002.

\bibitem{YKW2002}
Takeo Yamada, Seiji Kataoka, and Kohtaro Watanabe.
\newblock Heuristic and exact algorithms for the disjunctively constrained
  knapsack problem.
\newblock {\em IPSJ J.}, 43(9):2864--2870, 2002.

\bibitem{Zuckerman05}
David Zuckerman.
\newblock Linear degree extractors and the inapproximability of max clique and
  chromatic number.
\newblock In {\em Proceedings of the thirty-eighth annual ACM symposium on
  Theory of computing}, STOC '06, pages 681--690, New York, NY, USA, 2006. ACM.

\end{thebibliography}

\section{Appendix}

Consider the  multi-constrained knapsack problem, $MKP$:
\[  \max B(X)= \sum\limits_{j} b_j x_j \]
\[{\rm s. t.} 
\sum_{j} a_{ij}x_j \leq W,  \  i= 1 , \ldots p \]
\[ x_j \in [0,1], j = 1, \ldots n \]
We will also refer to the budget constraints by the inequality
system \[ AX \leq W \]
where $A \in {\cal R}^{p \times n}$ and $X \in {\cal R}^n$.
W.l.o.g.  we will assume that $a_{ij} \leq  W$.
An approximate solution to this optimization problem can
be obtained by solving the fractional linear program obtained
by relaxing the integrality constraints on $x_j$, to be simply
$ 0 \leq x_j \leq 1$.  Let the optimum benefit
obtained be $OPT_R(MKP)$. And let the optimum of $MKP$ be
denoted by $OPT(MKP)$.  Clearly, $OPT(MKP) \leq OPT_R(MKP)$

Note that in the relaxed LP, the polytope defining the feasible region
has vertices defined by $n$ inequalities, that must be satisfied at 
the vertex. At most $p$ of these are from the budget constraints, $AX \leq W$. 
The other inequalities that are satisfied specifiy integral values
for the variables $x_j$. Thus at most $p$ components in  the vector
$X$ ae non-integral. Let us denote the corresponding varibles by $X_p$.
Let $x_m$ be the variable such that $x_m= \arg \max_{x_j \in X_p} b_jx_j$. 
Construct an integral solution, $X_I$, by setting the variables in $X-X_p$ according 
to the integral solution
in the relaxed LP. Let $B(X_I)$ be the benefit obtained from the integral solution.
Furthermore, consider the solution, $X_M$, where $x_m=1$ and
all other variables in $X_p$ are assigned the value $0$.
And let $B(X_M)$ be its benefit.
The benefit obtained 
by $\max \{ B(X_M), B(X_I) \} ) \geq OPT_R(MKP)/(p+1) \geq OPT(MKP)/(p+1)$.

This gives us the following result:
\begin{theorem}

\end{theorem}
There exists an approximate solution, $A$, to the multi-constrained knapsack 
problem such that the benefit of the optimal solution is within a factor $p+1$
\end{document}